\documentclass[pdflatex,sn-mathphys-ay]{sn-jnl}

\usepackage{graphicx}
\usepackage{amsmath,amssymb,amsfonts,amsthm,mathtools,bm}
\usepackage{enumitem}
\usepackage{booktabs}

\theoremstyle{thmstyleone}
\newtheorem{theorem}{Theorem}[section]
\newtheorem{proposition}[theorem]{Proposition}
\newtheorem{lemma}[theorem]{Lemma}

\theoremstyle{thmstylethree}
\newtheorem{definition}[theorem]{Definition}
\newtheorem{assumption}[theorem]{Assumption}
\newtheorem{example}[theorem]{Example}
\newtheorem{remark}[theorem]{Remark}

\newcommand{\E}{\mathbb E}
\newcommand{\Pp}{\mathbb P}
\newcommand{\F}{\mathbb F}
\newcommand{\G}{\mathbb G}
\newcommand{\D}{\mathbb D}
\newcommand{\Hh}{\mathcal H}
\newcommand{\Kk}{\mathcal K}
\newcommand{\R}{\mathbb R}
\newcommand{\Ran}{\operatorname{Ran}}

\newcommand{\doop}{\operatorname{do}}
\newcommand{\ip}[2]{\left\langle #1,#2\right\rangle}
\newcommand{\norm}[1]{\left\lVert #1\right\rVert}

\begin{document}

\title[Causal price-of-risk mandates]{Causal Price-of-Risk Mandates under Overlapping Information}

\author*[1,2]{\fnm{Alejandro} \sur{Rodriguez Dominguez}}\email{arodriguez@miraltabank.com}

\affil[1]{\orgdiv{Head of Quantitative Analysis}, \orgname{Miralta Finance Bank},
\orgaddress{\city{Madrid}, \country{Spain}}}

\affil[2]{\orgdiv{Department of Computer Science}, \orgname{University of Reading},
\orgaddress{\city{Reading}, \country{United Kingdom}}}

\abstract{We study whether causal risk mandates constructed from overlapping
information blocks can be implemented by one self-financing portfolio that is
optimal under pooled information. In an incomplete continuous Brownian market,
signal-responsive exposures are projected onto traded Brownian directions and
embedded as closed subspaces of a predictable Hilbert space. Each causal
mandate is the attainable projection of an identified, baseline-centered
intervention surface and is therefore fixed upstream of the allocator. A first
exact decomposition separates non-traded structural response, incompatibility
across local mandates, and distortion of the common traded book. A fusion
operator corrects duplication of common risk directions. The main theorem
shows that exact causal decentralization holds if and only if the pooled
enlargement preserves reference martingales, the pooled optimum has no
component outside the additive desk span, the local causal shadows are
compatible with one common book, and that book equals the additive pooled
optimum. Under immersion, log-growth regret separates into pooled-interaction
loss and common-book distortion, while raw causal implementation error also
contains non-attainability and incompatibility. No common benchmark can reduce
incompatibility. We derive the unique allocation under a quadratic
causal-mandate penalty and exact comparative statics in a two-desk model with
shared, desk-specific, and pooled-interaction risk. Additional results cover
joint interventions, representation stability, admissible coalitions, and a
covariance-weighted continuous-semimartingale formulation. The conclusions are
partial-equilibrium statements for continuous markets.}

\keywords{causal price of risk, decentralized portfolio management,
overlapping information, log-optimal portfolio, filtration immersion, fusion
frames}

\pacs[MSC 2020]{91G10, 60G44, 46C05, 62H12}

\maketitle

\section{Introduction}
\label{sec:introduction}

Investment organizations routinely divide a pooled opportunity set into
specialist mandates.  The local desks do not necessarily see disjoint risks:
they may condition on overlapping signals, use a common benchmark, and form
causal views by standardizing different interventions.  The central allocator
then faces a question that is more restrictive than ordinary portfolio
aggregation.  Do the separately specified causal mandates correspond to the
local projections of one traded exposure, and can that exposure coincide with
the portfolio that would be chosen under pooled information?

Existing work on decentralized investment management studies how a central
decision maker can combine specialist decisions or align delegated managers
\citep{Sharpe1981,EltonGruber2004,VanBinsbergenBrandtKoijen2008}, while
\citet{BlakeEtAl2013} document the institutional consequences in pension-fund
management.  Organizational models also show that decentralizing information
processing can be costly when
agents' decisions interact \citep{Vayanos2003}.  Those contributions establish
the economic importance of delegation and information aggregation.  They do
not, however, answer the narrower stochastic-finance question considered here:
whether baseline-centered causal price-of-risk shadows obtained from
overlapping predictable information sets are jointly realizable as one
self-financing book.

Four logically different requirements enter that question.  First, a raw
structural response need not point in a Brownian direction spanned by traded
assets, so causality must be separated from attainability.  Second, singleton
information blocks need not span a pooled interaction exposure.  Third,
overlap means that summing desk quantities double-counts common risk
directions.  Fourth, pooling information can change the reference martingale
innovations; an algebraically correct reconstruction is not automatically a
valid reference-market trading interpretation.  Treating these four issues as
one generic ``information loss'' conceals which defect can be repaired by an
allocator and which cannot.

We place all attainable driver-responsive exposures in one predictable
$L^2$ space and use the minimal market price of risk as the traded target.
Relative to a declared common baseline, each desk has an observational shadow
and an attainable causal shadow.  The desk subspaces form a fusion system: its
analysis operator sends a common book to its tuple of local projections, and
the inverse fusion operator reconstructs shared directions only once.  The
orthogonal complement of the additive desk span records pooled-only
interaction opportunities.

The main result gives a necessary-and-sufficient characterization of exact
causal decentralization.  It requires pooled immersion, absence of a
pooled-only optimal exposure, compatibility of the standalone causal shadows,
and equality between their reconstructed common book and the additive pooled
optimum.  The same geometry yields a hierarchy of exact error decompositions.
The
allocator's log-growth regret contains the pooled interaction and the distance
between two implementable common books.  The tuple of local mandate errors
contains, in addition, the component orthogonal to the range of the analysis
operator.  This last component is an infeasibility certificate: no common
baseline or additive traded book can satisfy all local causal mandates.
Before this range question is reached, an exact structural-to-book identity
separates the part of each identified intervention surface that is outside the
traded local space.  Raw causal implementation error therefore has three
orthogonal sources: non-attainability within desks, incompatibility across
desks, and displacement of the common traded book.  Only the last source can be
changed by choosing another additive book.

The distinction has a direct institutional consequence.  A benchmark change
can shift what all desks share but cannot alter their baseline-centered local
projections.  We prove that the smallest mandate error over every common
benchmark and additive book is exactly the incompatibility residual.  We then
derive the unique book chosen by an allocator who trades off pooled
log-growth against a declared quadratic causal-mandate penalty.  As the
penalty becomes large, the traded book converges to the least-squares causal
reconstruction, while the objective diverges whenever hard compliance is
infeasible.

A closed two-desk model makes the allocation consequences explicit.  One
state-dependent macro sleeve is shared, each desk has an idiosyncratic sleeve,
and the pooled allocator has an additional interaction sleeve.  The common
sleeve receives twice the mandate-penalty loading of either desk-specific
sleeve.  Disagreement on that shared sleeve does not change the optimal traded
book: it raises the irreducible mandate loss because no portfolio can deliver
two different projections on the same risk direction.

The analysis also distinguishes standalone from joint interventions.  A
jointly specified causal book decomposes into additive and pooled-interaction
components, and its regret has a corresponding orthogonal split.  Once a joint
target is supplied, the singleton--joint gap separates into an implementable
displacement and the same non-realizable incompatibility residual.  No joint
effect is inferred from singleton effects alone.

The contribution lies in the cross-layer characterization, not in its
constituent tools.  Fusion-frame inversion, log-utility completion of squares,
filtration immersion, causal standardization, and downward closure of
admissible information sets are established results.  Their role here is to
characterize when overlapping causal price-of-risk mandates implement the
pooled log-optimal traded exposure and to distinguish a repairable common-book
distortion from a mandate incompatibility that no common benchmark can repair.
The paper is partial equilibrium.  It makes no claim about
endogenous prices, crowding, liquidity, feedback, reflexivity, or jumps.

\subsection{Relation to the closest literature}

The decentralized-portfolio literature begins from the economic organization
of multiple managers.  \citet{Sharpe1981} analyzes decentralized investment
management; \citet{EltonGruber2004} derive implementable centralized rules
without requiring managers to reveal expected returns; and
\citet{VanBinsbergenBrandtKoijen2008} study benchmarks that reduce objective
misalignment between a chief investment officer and specialist managers.
Using pension-fund data, \citet{BlakeEtAl2013} examine the performance and
coordination consequences of decentralized management.
\citet{Vayanos2003} endogenizes a hierarchy when information aggregation is
lossy and decisions interact.  Our manager spaces are institutionally
specified, while the causal targets are identified upstream from declared
interventions, adjustment variables, and standardization laws; neither is
chosen by the allocator.  The new question is not incentive design or
organizational equilibrium, but exact implementability of overlapping local
causal shadows as one dynamically traded exposure.

Conditioning information has a separate role in asset pricing and portfolio
choice.  \citet{HansenRichard1987} show how conditioning information changes
the restrictions of dynamic asset-pricing models, while
\citet{FersonSiegel2001} and \citet{BrandtSantaClara2006} study its efficient
use in managed portfolios.  Here conditioning information defines closed
strategy spaces, but the object being compared across spaces is the attainable
minimal market price of risk.  Baseline centering measures only the
incremental sleeve above a declared common information set; it is not an
unconditional welfare decomposition.

Changes of filtration can modify martingale, projection, and deflator
properties
\citep{FollmerProtter2011,AcciaioFontanaKardaras2016,KardarasRuf2020,BiaginiMazzonPerkkio2023}, and the
additional logarithmic utility of enlarged information has an information-drift
representation \citep{AnkirchnerDereichImkeller2006}.  We do not extend the
general enlargement theory.  Immersion is a separate admissibility condition
ensuring that the reference innovation remains valid after information is
pooled.  The
finite squared errors in our main theorem do not quantify a failure of
immersion.  Likewise, semimartingale log optimality and the num\'eraire
portfolio are not new here \citep{KaratzasKardaras2007}; they supply the
financial criterion to which the causal mandates are compared.

\citet{DAuriaSalmeron2025} use initial enlargement to compute optimal portfolios
and additional logarithmic utility when an investor anticipates the future
course of a stochastic short rate.  In their setting, extra information changes
the semimartingale decomposition through an information drift.  Here immersion
instead selects the coalitions for which the reference innovation is preserved,
and the object of interest is the range compatibility of several overlapping
local causal mandates.  The two questions are therefore complementary rather
than alternative measures of the same informational gain.

The causal layer uses standard consistency, conditional exchangeability,
positivity, and standardization ideas \citep{Robins1986,Pearl2009}.  We project
the resulting structural surface onto an attainable local exposure space.
Consequently, equality of attainable causal shadows does not imply equality of
the underlying raw structural responses when non-traded directions remain.
Likewise, standalone interventions do not identify a joint intervention.

Finally, the analysis and synthesis operators are those of fusion-frame
theory, which was developed for stable distributed reconstruction from
overlapping subspaces \citep{CasazzaKutyniokLi2008}.  Their role here is
instrumental: they expose commonality, stability, and range incompatibility in
the financial problem.  The admissible-coalition complex later in the paper
records only those pooled information sets for which immersion holds; it is a
domain restriction on the financial theorem, not a claimed new combinatorial
construction.

Section~\ref{sec:market_spaces}
constructs the attainable predictable spaces.  Section~\ref{sec:local_shadows}
defines observational and causal shadows.  Sections~\ref{sec:reconstruction}
and~\ref{sec:main_theorem} give the reconstruction and exact decentralization
results.  Section~\ref{sec:joint_implementation} studies joint mandates,
benchmark repair, and penalized implementation.  Section
\ref{sec:joint_implementation} also gives a closed two-desk implementation.
Section~\ref{sec:commonality_admissibility} analyzes commonality and admissible
coalitions.  Section~\ref{sec:semimartingale} gives the continuous-semimartingale
formulation.  Section~\ref{sec:numerical_audit} gives a public-data
implementation audit and a literature-calibrated controlled experiment, and
Section~\ref{sec:conclusion} concludes.

\section{Market, information, and attainable exposure spaces}
\label{sec:market_spaces}

\subsection{Reference market and self-financing exposures}

Let $(\Omega,\mathcal F,\F,\Pp)$ satisfy the usual conditions on a finite
horizon $[0,T]$.  The reference filtration $\F=(\mathcal F_t)_{0\leq t\leq
T}$ supports a $q$-dimensional Brownian motion $W$.  Discounted excess returns
of $n$ traded assets satisfy
\begin{equation}\label{eq:returns}
 d\bm R_t=\bm b_t\,dt+\bm\sigma_t\,d\bm W_t,
 \qquad
 \bm a_t:=\bm\sigma_t\bm\sigma_t^\top,
\end{equation}
where $\bm b$ and $\bm\sigma$ are $\F$-predictable.  Assume throughout the
Brownian formulation that $\bm a$ is uniformly positive definite and
\begin{equation}\label{eq:integrability_h}
 \E\int_0^T\bm b_t^\top\bm a_t^{-1}\bm b_t\,dt<\infty.
\end{equation}
The market may be incomplete: $n$ need not equal $q$.

Define the minimal Brownian market price of risk and the pointwise attainable
projector by
\begin{equation}\label{eq:minimal_h}
 \bm h:=\bm\sigma^\top\bm a^{-1}\bm b,
 \qquad
 \bm\Pi^{K}:=\bm\sigma^\top\bm a^{-1}\bm\sigma.
\end{equation}
For $\Pp\otimes dt$-almost every $(\omega,t)$, $\bm\Pi^K_t$ is the orthogonal
projector in $\R^q$ onto $\Ran(\bm\sigma_t^\top)$.  In particular,
$\bm h=\bm\Pi^K\bm h$.

Let $J=\{1,\ldots,m\}$ be a coalition of driver blocks.  For
$A\subseteq J$, let $\D^A$ be the corresponding driver filtration, with
$\D^0\subseteq\D^A$ and
\[
 \D^A\subseteq\D^B\quad\text{whenever }A\subseteq B.
\]
Set $\G^A:=\F\vee\D^A$.  All filtrations are completed by the same
$\Pp$-null sets and made right-continuous.  Their predictable sigma-fields
are denoted by $\mathcal P(\G^A)$ and $\mathcal P(\D^A)$.

All exposure spaces will be embedded in the common pooled Hilbert space
\begin{equation}\label{eq:ambient_H}
 \Hh:=L^2\!\left(\Omega\times[0,T],
        \mathcal P(\G^J),\Pp\otimes dt;\R^q\right),
\end{equation}
with its usual inner product and norm.  The attainable exposure space is
\begin{equation}\label{eq:K_space}
 \Kk:=\{\bm\theta\in\Hh:\bm\Pi^K\bm\theta=\bm\theta
       \ (\Pp\otimes dt)\text{-a.e.}\}.
\end{equation}
It is closed because multiplication by $\bm\Pi^K$ is an orthogonal projector
on $\Hh$.

For $\bm\theta\in\Kk$, define
\begin{equation}\label{eq:weights_from_exposure}
 \bm w:=\bm a^{-1}\bm\sigma\bm\theta.
\end{equation}
Then $\bm\sigma^\top\bm w=\bm\theta$ and
\begin{equation}\label{eq:exposure_identities}
 \bm w^\top\bm b=\bm\theta^\top\bm h,
 \qquad
 \bm w^\top\bm a\bm w=\norm{\bm\theta}_{\R^q}^2.
\end{equation}
Thus $\bm\theta$ is the Brownian exposure of an actual self-financing traded
portfolio, rather than an aggregation of portfolio-level Sharpe ratios.

\subsection{Driver-restricted attainable spaces}

For every $A\subseteq J$, define
\begin{equation}\label{eq:HA}
 \Hh_A:=\overline{\left\{
 \bm\Pi^K\bm\xi:\ 
 \bm\xi\in\Hh,
 \ \bm\xi\text{ is }\D^A\text{-predictable}
 \right\}}^{\Hh}.
\end{equation}
The factor $\bm\Pi^K$ may depend on reference-market information.  Hence an
element of $\Hh_A$ is generally $\G^A$-predictable, not necessarily
$\D^A$-predictable.  Economically, $\D^A$ determines the signal-responsive
input and $\bm\Pi^K$ converts that input into a traded risk exposure.

\begin{lemma}[Nested attainable predictable spaces]
\label{lem:nested_spaces}
Each $\Hh_A$ is a closed subspace of $\Kk$.  If $A\subseteq B$, then
$\Hh_A\subseteq\Hh_B$.  Consequently, the orthogonal projections $P_A$ and
$P_B$ in the common space $\Hh$ satisfy $P_AP_B=P_A$.
\end{lemma}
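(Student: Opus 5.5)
The plan is to prove the three assertions in order. Each reduces to elementary facts about the set
\[
 V_A:=\{\bm\Pi^K\bm\xi:\ \bm\xi\in\Hh,\ \bm\xi\text{ is }\D^A\text{-predictable}\},
\]
whose $\Hh$-closure is $\Hh_A$ by \eqref{eq:HA}.

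\emph{Step 1: $V_A$ is a linear subspace of $\Kk$.} If $\bm\xi_1,\bm\xi_2$ are $\D^A$-predictable and in $\Hh$, then so is $\alpha\bm\xi_1+\beta\bm\xi_2$. Pointwise linearity of $\bm\Pi^K$ then gives that $V_A$ is a linear subspace. We need $\D^A$-predictable processes to be $\mathcal P(\G^J)$-measurable. This holds because $\D^A\subseteq\D^J\subseteq\G^J$ after the common completion, so $\mathcal P(\D^A)\subseteq\mathcal P(\G^J)$. Moreover $\bm\Pi^K$ is $\F$-predictable, being built from $\bm\sigma$ and $\bm a^{-1}$ with $\bm a$ uniformly positive definite. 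It is also pointwise an orthogonal projector, so $\bm\Pi^K\bm\xi$ is $\mathcal P(\G^J)$-measurable, satisfies $|\bm\Pi^K\bm\xi|\le|\bm\xi|$ and hence lies in $\Hh$, and satisfies $\bm\Pi^K(\bm\Pi^K\bm\xi)=\bm\Pi^K\bm\xi$. Thus $V_A\subseteq\Kk$.

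\emph{Step 2: closedness.} The closure of a linear subspace in a Hilbert space is a closed linear subspace, so $\Hh_A$ is closed. Since $\Kk$ is closed (stated after \eqref{eq:K_space}) and contains $V_A$, it contains $\overline{V_A}=\Hh_A$.

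\emph{Step 3: monotonicity.} Let $A\subseteq B$. Then $\D^A\subseteq\D^B$ gives $\mathcal P(\D^A)\subseteq\mathcal P(\D^B)$. So every $\D^A$-predictable $\bm\xi$ is $\D^B$-predictable, which gives $V_A\subseteq V_B$. Taking closures yields $\Hh_A\subseteq\Hh_B$.

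\emph{Step 4: projections.} Because $\Hh_A,\Hh_B$ are closed, the orthogonal projections $P_A,P_B$ exist on $\Hh$. Use the standard fact that $M\subseteq N$ for closed subspaces implies $P_NP_M=P_M$. Taking adjoints, and using self-adjointness, gives $P_MP_N=P_M$. Concretely, for $x\in\Hh$, write $x=P_Bx+(I-P_B)x$. Since $(I-P_B)x\perp\Hh_B\supseteq\Hh_A$, we get $P_A(I-P_B)x=0$, so $P_AP_Bx=P_Ax$.

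The only step needing care, and the one I expect to be the main obstacle, is the measurability point in Step 1. We must check that $\bm\Pi^K\bm\xi$ is predictable for the \emph{pooled} filtration $\G^J$, so that $V_A$ really sits in the ambient space \eqref{eq:ambient_H}. This relies on the paper's convention that all filtrations are completed by the same null sets and made right-continuous, so that predictable $\sigma$-fields are monotone under inclusion of filtrations. I would state this explicitly: for filtrations $\mathbb A\subseteq\mathbb B$, the generating predictable rectangles $(s,t]\times E$ with $E\in\mathcal A_s\subseteq\mathcal B_s$ are included, so $\mathcal P(\mathbb A)\subseteq\mathcal P(\mathbb B)$. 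Everything else is routine Hilbert-space geometry.
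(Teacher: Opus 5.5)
Your proposal is correct and follows the same route as the paper. The generators $\bm\Pi^K\bm\xi$ lie in the closed space $\Kk$. Inclusion of predictable $\sigma$-fields gives inclusion of the generating sets and hence of their closures. The identity $P_AP_B=P_A$ is then the standard nested-projection identity. Your additional checks (that the generating set is linear, and that $\bm\Pi^K\bm\xi$ is $\mathcal P(\G^J)$-measurable and square-integrable) make explicit what the paper's short proof takes for granted.
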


\begin{proof}
Closedness holds by definition.  Every generator
$\bm\Pi^K\bm\xi$ belongs to $\Kk$, and $\Kk$ is closed, so
$\Hh_A\subseteq\Kk$.  Predictable-field inclusion under $A\subseteq B$
implies inclusion of the generating sets and hence of their closures.  The
projection identity is the standard identity for nested closed subspaces.
\end{proof}

\begin{remark}[Why raw conditional projection is insufficient]
In a complete Brownian market $\bm\Pi^K=I$, and $P_A\bm h$ reduces to the
usual predictable $L^2$ projection onto the driver-generated strategy class.
In an incomplete market, componentwise conditional expectation of an
attainable vector need not remain in the random range of
$\bm\sigma^\top$.  Definition \eqref{eq:HA} prevents a projected attribution
from being assigned to a non-traded Brownian direction.
\end{remark}

\subsection{Information admissibility}

\begin{definition}[Reference information admissibility]
\label{def:immersion}
A coalition $A$ is information-admissible when
\begin{equation}\label{eq:immersion}
 \F\hookrightarrow\G^A,
\end{equation}
that is, every bounded $(\Pp,\F)$-martingale remains a
$(\Pp,\G^A)$-martingale.
\end{definition}

Under immersion, $W$ remains a $\G^A$-Brownian motion.  The condition licenses
the reference-innovation interpretation of driver-responsive trading.  Its
failure is not, by itself, a claim of arbitrage, infinite utility, or failure
of the Hilbert-space projections.

For a square-integrable exposure $\bm\theta\in\Hh_A$, let wealth satisfy
\begin{equation}\label{eq:wealth}
 \frac{dX_t}{X_t}=\bm w_t^\top d\bm R_t,
 \qquad
 \bm w=\bm a^{-1}\bm\sigma\bm\theta.
\end{equation}
Whenever the stochastic integral is a true martingale, immersion and
It\^o's formula give
\begin{equation}\label{eq:J}
 \E\!\left[\log\frac{X_T}{X_0}\right]
 =\mathcal J(\bm\theta)
 :=\E\int_0^T\left(
 \bm\theta_t^\top\bm h_t-\frac12\norm{\bm\theta_t}_{\R^q}^2
 \right)dt.
\end{equation}
The quadratic functional on the right exists independently of immersion; only
its interpretation through the reference Brownian innovation uses
Definition~\ref{def:immersion}.

\section{Baseline-centered observational and causal shadows}
\label{sec:local_shadows}

\subsection{Observational shadows and their financial value}

Write $0$ for the common baseline, and suppose
\begin{equation}\label{eq:baseline_nesting}
 \Hh_0\subseteq\Hh_j\subseteq\Hh_J,
 \qquad j=1,\ldots,m.
\end{equation}
Define the local incremental spaces
\begin{equation}\label{eq:Ej}
 E_j:=\Hh_j\cap\Hh_0^\perp
\end{equation}
and the baseline-centered observational shadows
\begin{equation}\label{eq:pj}
 \bm p_j:=(P_j-P_0)\bm h=P_{E_j}\bm h.
\end{equation}

\begin{proposition}[Baseline-centered local value]
\label{prop:local_value}
For every coalition $A\subseteq J$,
\begin{equation}\label{eq:restricted_optimum}
 \sup_{\bm\theta\in\Hh_A}\mathcal J(\bm\theta)
 =\frac12\norm{P_A\bm h}^2,
 \qquad
 \bm\theta_A^\star=P_A\bm h.
\end{equation}
In particular,
\begin{equation}\label{eq:baseline_value}
 \sup_{\Hh_j}\mathcal J-\sup_{\Hh_0}\mathcal J
 =\frac12\norm{\bm p_j}^2.
\end{equation}
If $\bm p_j=0$, the local driver block adds no value to the declared
driver-responsive sleeve above baseline.
\end{proposition}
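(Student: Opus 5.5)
The plan is to complete the square in the Hilbert space $\Hh$. First I will note that $\bm h\in\Hh$. Since $\bm h=\bm\sigma^\top\bm a^{-1}\bm b$, we have $\norm{\bm h_t}^2=\bm b_t^\top\bm a_t^{-1}\bm b_t$, so \eqref{eq:integrability_h} makes $\bm h$ square-integrable. It is also $\F$-predictable, hence $\G^J$-predictable. The functional $\mathcal J$ can then be written as
\[
 \mathcal J(\bm\theta)=\ip{\bm\theta}{\bm h}-\tfrac12\norm{\bm\theta}^2
 =\tfrac12\norm{\bm h}^2-\tfrac12\norm{\bm h-\bm\theta}^2 ,
\]
and this is finite for every $\bm\theta\in\Hh$. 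I stress that this uses only the quadratic functional, not immersion, in line with the remark after \eqref{eq:J}.

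Next, fix a coalition $A$. By Lemma~\ref{lem:nested_spaces}, $\Hh_A$ is a closed subspace, so the orthogonal projection $P_A$ exists. For $\bm\theta\in\Hh_A$, the Pythagorean decomposition $\bm h-\bm\theta=(\bm h-P_A\bm h)+(P_A\bm h-\bm\theta)$ has orthogonal summands. Substituting it gives
\[
 \mathcal J(\bm\theta)=\tfrac12\norm{\bm h}^2-\tfrac12\norm{(I-P_A)\bm h}^2-\tfrac12\norm{P_A\bm h-\bm\theta}^2
 =\tfrac12\norm{P_A\bm h}^2-\tfrac12\norm{P_A\bm h-\bm\theta}^2 .
\]
The supremum is therefore attained uniquely, up to $\Pp\otimes dt$-null sets, at $\bm\theta_A^\star=P_A\bm h$, and it equals $\tfrac12\norm{P_A\bm h}^2$. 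This proves \eqref{eq:restricted_optimum}.

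For \eqref{eq:baseline_value}, the nesting \eqref{eq:baseline_nesting} together with Lemma~\ref{lem:nested_spaces} gives $P_0P_j=P_jP_0=P_0$. Hence $P_j-P_0$ is the orthogonal projection onto $E_j=\Hh_j\cap\Hh_0^\perp$, which justifies the equality $\bm p_j=P_{E_j}\bm h$ in \eqref{eq:pj}. Writing $P_j\bm h=P_0\bm h+\bm p_j$ with orthogonal summands gives $\norm{P_j\bm h}^2-\norm{P_0\bm h}^2=\norm{\bm p_j}^2$, and halving yields \eqref{eq:baseline_value}. The final sentence is then immediate: if $\bm p_j=0$, the two suprema coincide.

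I expect no real obstacle; the argument is routine Hilbert-space geometry. The only care points are checking that $\bm h\in\Hh$ through \eqref{eq:integrability_h}, and verifying that $P_j-P_0$ is an orthogonal projector, which is exactly where the commutation from the nesting of Lemma~\ref{lem:nested_spaces} is needed. I would also remark that optimality is stated for the quadratic functional $\mathcal J$; its reading as expected log-growth via \eqref{eq:J} additionally requires immersion and the true-martingale condition.
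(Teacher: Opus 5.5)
Your proposal is correct and follows essentially the paper's argument: you complete the square on the closed subspace $\Hh_A$ to get $\mathcal J(\bm\theta)=\frac12\norm{P_A\bm h}^2-\frac12\norm{\bm\theta-P_A\bm h}^2$, then use the orthogonal splitting $P_j\bm h=P_0\bm h+\bm p_j$ coming from $\Hh_0\subseteq\Hh_j$. Your checks that $\bm h\in\Hh$ (via the integrability condition) and that $P_j-P_0$ is the orthogonal projector onto $E_j$ are details the paper leaves implicit.
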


\begin{proof}
For $\bm\theta\in\Hh_A$, orthogonality gives
$\ip{\bm\theta}{\bm h}=\ip{\bm\theta}{P_A\bm h}$.  Completing the square,
\[
 \mathcal J(\bm\theta)
 =\frac12\norm{P_A\bm h}^2
  -\frac12\norm{\bm\theta-P_A\bm h}^2.
\]
This proves \eqref{eq:restricted_optimum}.  Since
$\Hh_j=\Hh_0\oplus E_j$, nested projection yields
$P_j\bm h=P_0\bm h+\bm p_j$ with orthogonal summands, proving
\eqref{eq:baseline_value}.
\end{proof}

The attribution object needed later is the exposure $\bm p_j$, not merely the
scalar $\norm{\bm p_j}^2$.  Different local sleeves may contain the same risk
direction, so their squared values cannot be summed before overlap has been
resolved.

\subsection{Interventional surfaces and attainable causal shadows}

For driver block $j$, let $C_t$ be a baseline state and $Y_t^j$ its
incremental driver state.  Let $\bm h_t(y)$ denote a potential response of the
minimal market price of risk under the state intervention
$\doop(Y_t^j=y)$.  This is a family of static state interventions indexed by
$t$, not a continuous-time treatment policy.

The intervention changes the price-of-risk response while holding fixed the
reference trading technology $\bm\sigma$, its attainable projector
$\bm\Pi^K$, and the declared predictable strategy spaces.  Thus the causal
shadow below is a mandate evaluated in the observed market, not the price of
risk of a counterfactual equilibrium market.  An intervention that also
changes asset span or market coefficients requires a different model.

\begin{assumption}[Identification of the local causal response]
\label{ass:causal}
For every $j$ and $dt$-almost every $t$:
\begin{enumerate}[label=(C\arabic*)]
\item consistency holds for $\bm h_t(y)$;
\item an observed pre-intervention variable $U_t^j$ satisfies
$\bm h_t(y)\perp\!\!\!\perp Y_t^j\mid(C_t,U_t^j)$;
\item positivity holds on the baseline strata and support used below;
\item the relevant regular conditional kernels have jointly measurable,
square-integrable predictable versions.
\end{enumerate}
\end{assumption}

\begin{remark}[Identification precedes allocation]
\label{rem:identification_precedes_allocation}
Once the intervention, the baseline state, the adjustment variable, and the
standardization law in Assumption~\ref{ass:causal} are fixed, the surface below
is identified up to the usual almost-everywhere equivalence of conditional
expectations.  It is not selected by the allocator and is not a tuning
parameter of the aggregation problem.  Changing any of those causal inputs
defines a different estimand.  The financial analysis begins only after this
identified surface is mapped into the fixed traded technology.
\end{remark}

The standardized interventional surface is
\begin{equation}\label{eq:mdo}
 \bm m_j^{\mathrm{do}}(t,c,y)
 :=\int\E[\bm h_t\mid C_t=c,Y_t^j=y,U_t^j=u]
       \,P_{U_t^j\mid C_t=c}(du).
\end{equation}
Evaluate this surface on the observational distribution of $(C,Y^j)$ and set
\begin{equation}\label{eq:raw_causal}
 \widetilde{\bm c}_j
 :=\bm m_j^{\mathrm{do}}(\cdot,C,Y^j)
   -P_0\bm m_j^{\mathrm{do}}(\cdot,C,Y^j).
\end{equation}
The standardization distribution does not turn
$\norm{\widetilde{\bm c}_j}^2$ into the value of an intervention policy.

In an incomplete market, a structural response can contain a Brownian
direction that the traded assets do not span.  Define the attainable local
causal shadow by
\begin{equation}\label{eq:cj}
 \bm c_j:=P_{E_j}\widetilde{\bm c}_j\in E_j
\end{equation}
and its non-attainable/local-mandate residual by
\begin{equation}\label{eq:causal_nonattainable}
 \bm\nu_j:=\widetilde{\bm c}_j-\bm c_j.
\end{equation}
The aggregation theorem concerns the implementable shadows $\bm c_j$.  If the
centered interventional surface already belongs to $E_j$---in particular, in
the complete-market driver-predictable case---then $\bm\nu_j=0$ and
$\bm c_j=\widetilde{\bm c}_j$.

Define the attainable local distortion and signed energy wedge by
\begin{equation}\label{eq:local_delta_wedge}
 \bm\delta_j:=\bm p_j-\bm c_j,
 \qquad
 w_j:=\norm{\bm p_j}^2-\norm{\bm c_j}^2.
\end{equation}

\begin{proposition}[Local causal implementation identity]
\label{prop:local_causal_value}
The signed wedge satisfies
\begin{equation}\label{eq:wedge_identity}
 w_j=\ip{\bm\delta_j}{\bm p_j+\bm c_j},
\end{equation}
and may be positive, negative, or zero.  The normalized incremental value of
implementing the attainable causal shadow above the baseline-optimal exposure
is
\begin{align}
 \mathcal V_j^{\mathrm{do}}
 &:=2\{\mathcal J(P_0\bm h+\bm c_j)-\mathcal J(P_0\bm h)\}
 \label{eq:causal_value_def}\\
 &=\norm{\bm p_j}^2-\norm{\bm p_j-\bm c_j}^2.\label{eq:local_distortion}
\end{align}
Equivalently, the regret relative to the locally optimal sleeve is
\begin{equation}\label{eq:local_regret}
 \mathcal J(P_j\bm h)-\mathcal J(P_0\bm h+\bm c_j)
 =\frac12\norm{\bm p_j-\bm c_j}^2.
\end{equation}
Consequently, $w_j=0$ does not imply a correct causal shadow or zero financial
regret.
\end{proposition}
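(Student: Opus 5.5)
The whole proposition reduces to Hilbert-space algebra in $\Hh$ together with the completion-of-squares identity from the proof of Proposition~\ref{prop:local_value}.

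\textbf{Wedge identity.} Expand
\[
 \ip{\bm p_j-\bm c_j}{\bm p_j+\bm c_j}
 =\norm{\bm p_j}^2-\norm{\bm c_j}^2
 +\ip{\bm p_j}{\bm c_j}-\ip{\bm c_j}{\bm p_j}.
\]
The two cross terms cancel because the inner product on $\Hh$ is real and symmetric. This gives \eqref{eq:wedge_identity} at once.

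\textbf{Sign of $w_j$.} I will not attempt a general argument here. It suffices to exhibit, inside a fixed $E_j$, choices of $\bm c_j$ with $\norm{\bm c_j}$ smaller than, larger than, or equal to $\norm{\bm p_j}$. In particular, any $\bm c_j\neq\bm p_j$ with $\norm{\bm c_j}=\norm{\bm p_j}$ (for instance $\bm c_j=-\bm p_j$ when $\bm p_j\neq0$, or a rotation within $E_j$ when $\dim E_j\ge2$) gives $w_j=0$. The example only needs to be realizable in principle, since $\widetilde{\bm c}_j$ is an arbitrary upstream surface and $\bm c_j$ may be any element of $E_j$.

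\textbf{Value formula.} Use the completion of squares from the proof of Proposition~\ref{prop:local_value} with $A=j$. For $\bm\theta\in\Hh_j$,
\[
 \mathcal J(\bm\theta)=\tfrac12\norm{P_j\bm h}^2-\tfrac12\norm{\bm\theta-P_j\bm h}^2 .
\]
The candidate $\bm\theta=P_0\bm h+\bm c_j$ lies in $\Hh_j$, because $P_0\bm h\in\Hh_0\subseteq\Hh_j$ and $\bm c_j\in E_j\subseteq\Hh_j$. The decomposition $P_j\bm h=P_0\bm h+\bm p_j$ then gives $\bm\theta-P_j\bm h=\bm c_j-\bm p_j$. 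For $\bm\theta=P_0\bm h$, the same formula gives $\bm\theta-P_j\bm h=-\bm p_j$.

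Subtracting the two evaluations and doubling yields
\[
 \mathcal V_j^{\mathrm{do}}=\norm{\bm p_j}^2-\norm{\bm p_j-\bm c_j}^2 .
\]
Since $\mathcal J(P_j\bm h)=\tfrac12\norm{P_j\bm h}^2$ by \eqref{eq:restricted_optimum}, the regret identity \eqref{eq:local_regret} is read off directly.

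\textbf{Final claim.} The $w_j=0$ example above has $\bm c_j\neq\bm p_j$, hence $\bm\delta_j\neq0$. By \eqref{eq:local_regret} its regret is $\tfrac12\norm{\bm\delta_j}^2>0$, so $w_j=0$ implies neither a correct causal shadow nor zero regret.

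\textbf{Main point needing care.} The only step that requires attention is checking that $P_0\bm h+\bm c_j\in\Hh_j$, so that the completed-square formula applies. This rests on the nesting \eqref{eq:baseline_nesting} and on $E_j\subseteq\Hh_j$.
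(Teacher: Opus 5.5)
Your proposal is correct and follows essentially the paper's argument. The wedge identity comes from polarization, and the value and regret formulas come from $P_j\bm h=P_0\bm h+\bm p_j$ together with completion of squares on $\Hh_j$. The paper expands $\mathcal J$ directly using $\ip{\bm c_j}{\bm h-P_0\bm h}=\ip{\bm c_j}{\bm p_j}$, while you evaluate the completed square at two points, which is the same computation; your sign and $w_j=0$ examples match Example~\ref{ex:cancellation}.
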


\begin{proof}
Equation \eqref{eq:wedge_identity} is polarization.  Since
$\bm c_j\in E_j\subseteq\Hh_j$ and $\bm c_j\perp\Hh_0$,
\[
 \ip{\bm c_j}{\bm h-P_0\bm h}
 =\ip{\bm c_j}{\bm p_j}.
\]
Expanding the quadratic criterion proves \eqref{eq:causal_value_def} and
\eqref{eq:local_distortion}.  Completion of squares around
$P_j\bm h=P_0\bm h+\bm p_j$ proves \eqref{eq:local_regret}.
\end{proof}

\begin{example}[Energy cancellation is not causal correctness]
\label{ex:cancellation}
Let $E_j$ contain a nonzero centered exposure $Y$, take
$\bm p_j=\beta Y$ and $\bm c_j=-\beta Y$.  Then $w_j=0$, although
$\bm\delta_j=2\beta Y$ and the local financial regret is
$2\beta^2\norm{Y}^2$.  Equal observational and interventional energies can
therefore conceal a reversal of the implementable exposure.
\end{example}

\section{Overlap-corrected reconstruction and pooled interaction}
\label{sec:reconstruction}

\subsection{Additive and pooled-only exposure spaces}

Set
\begin{equation}\label{eq:EJ}
 E_J:=\Hh_J\cap\Hh_0^\perp,
 \qquad
 E:=\overline{E_1+\cdots+E_m}^{\Hh}.
\end{equation}
Every $E_j$ is contained in $E_J$, so $E\subseteq E_J$.  Define the
pooled-only complement
\begin{equation}\label{eq:RJ}
 R_J:=E_J\cap E^\perp,
 \qquad E_J=E\oplus R_J.
\end{equation}

The analysis operator and its adjoint are
\begin{align}
 A_E:E&\longrightarrow\bigoplus_{j=1}^mE_j,
 &A_E\bm x&=(P_{E_1}\bm x,\ldots,P_{E_m}\bm x),
 \label{eq:analysis_operator}\\
 A_E^*(\bm x_1,\ldots,\bm x_m)&=\sum_{j=1}^m\bm x_j.
 \label{eq:adjoint}
\end{align}

\begin{assumption}[Stable local representation]
\label{ass:fusion_bound}
There exists $a_E>0$ such that
\begin{equation}\label{eq:fusion_lower}
 a_E\norm{\bm x}^2
 \leq\sum_{j=1}^m\norm{P_{E_j}\bm x}^2,
 \qquad \bm x\in E.
\end{equation}
\end{assumption}

The upper bound
\begin{equation}\label{eq:fusion_upper}
 \sum_{j=1}^m\norm{P_{E_j}\bm x}^2\leq m\norm{\bm x}^2
\end{equation}
always holds.  Assumption~\ref{ass:fusion_bound} makes $A_E$ bounded below,
so its range is closed.  The fusion operator
\begin{equation}\label{eq:fusion_operator}
 S_E:=A_E^*A_E=\sum_{j=1}^mP_{E_j}|_E
\end{equation}
is then positive and boundedly invertible on $E$, and
\begin{equation}\label{eq:pseudoinverse}
 A_E^\dagger=S_E^{-1}A_E^*.
\end{equation}

Define the additive and pooled-only observational components by
\begin{equation}\label{eq:padd_pint}
 \bm p_{\mathrm{add}}:=P_E\bm h,
 \qquad
 \bm p_{\mathrm{int}}:=P_{R_J}\bm h.
\end{equation}

\begin{proposition}[Canonical overlap-corrected reconstruction]
\label{prop:reconstruction}
The pooled incremental optimum has the orthogonal decomposition
\begin{equation}\label{eq:pooled_split}
 (P_J-P_0)\bm h
 =\bm p_{\mathrm{add}}+\bm p_{\mathrm{int}}.
\end{equation}
The tuple $\bm p=(\bm p_1,\ldots,\bm p_m)$ of local observational shadows
satisfies
\begin{equation}\label{eq:p_analysis}
 \bm p=A_E\bm p_{\mathrm{add}},
 \qquad
 \boxed{\bm p_{\mathrm{add}}
 =S_E^{-1}\sum_{j=1}^m\bm p_j.}
\end{equation}
Thus exact overlap is counted once by the inverse fusion operator.  The
component $\bm p_{\mathrm{int}}$ is orthogonal to every additive local sleeve
and cannot be recovered from the singleton shadows.
\end{proposition}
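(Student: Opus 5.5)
The plan is to reduce everything to nested orthogonal projections in $\Hh$ and then use the fusion-frame identity $S_E=A_E^*A_E$.

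\textbf{Step 1: the pooled split.} By Lemma~\ref{lem:nested_spaces} and \eqref{eq:baseline_nesting}, $\Hh_0\subseteq\Hh_J$. Hence $\Hh_J=\Hh_0\oplus E_J$ and $P_J-P_0=P_{E_J}$, because the projections commute and $P_JP_0=P_0$. Next, $E\subseteq E_J$ is closed and $R_J=E_J\cap E^\perp$, so $E_J=E\oplus R_J$ and $P_{E_J}=P_E+P_{R_J}$. Applying this to $\bm h$ gives \eqref{eq:pooled_split}, and the two summands are orthogonal because $E\perp R_J$.

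\textbf{Step 2: the identity $\bm p=A_E\bm p_{\mathrm{add}}$.} Since $E_j\subseteq E$, we have $P_{E_j}=P_{E_j}P_E$. Therefore
\[
\bm p_j=P_{E_j}\bm h=P_{E_j}P_E\bm h=P_{E_j}\bm p_{\mathrm{add}},
\]
which is exactly the $j$-th component of $A_E\bm p_{\mathrm{add}}$. I will state this step carefully, because the definition \eqref{eq:pj} writes $\bm p_j$ both as $(P_j-P_0)\bm h$ and as $P_{E_j}\bm h$. I would justify that equality in one line: $\Hh_j=\Hh_0\oplus E_j$, with $\Hh_0\subseteq\Hh_j$ coming from \eqref{eq:baseline_nesting}.

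\textbf{Step 3: the boxed formula.} Apply $A_E^*$ from \eqref{eq:adjoint} to get
\[
\sum_j\bm p_j=A_E^*A_E\bm p_{\mathrm{add}}=S_E\bm p_{\mathrm{add}}.
\]
Under Assumption~\ref{ass:fusion_bound}, $S_E$ is positive and satisfies $\ip{S_E\bm x}{\bm x}=\sum_j\norm{P_{E_j}\bm x}^2\geq a_E\norm{\bm x}^2$ on $E$. It is also self-adjoint and bounded by \eqref{eq:fusion_upper}, so Lax--Milgram (or the standard fact for coercive self-adjoint operators) makes it boundedly invertible on $E$. Inverting gives the boxed identity. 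The phrase ``overlap counted once'' is then interpretive: $\sum_j\bm p_j$ counts a direction lying in several $E_j$ several times, and $S_E^{-1}$ undoes exactly this multiplicity. I could illustrate this with $\bm x\in E_1\cap E_2$, where $S_E\bm x$ picks up the contribution $2\bm x$ from those two sleeves.

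\textbf{Step 4: the interaction component.} Since $\bm p_{\mathrm{int}}\in R_J\subseteq E^\perp\subseteq E_j^\perp$, we get $P_{E_j}\bm p_{\mathrm{int}}=0$ for every $j$, which gives orthogonality to every sleeve. For non-recoverability, I would exhibit two market price-of-risk vectors $\bm h$ and $\bm h'=\bm h+\bm r$ with $\bm r\in R_J\setminus\{0\}$, assuming $R_J\neq\{0\}$. The tuples of singleton shadows coincide, while the interaction components differ. Hence no map from the singleton shadows can return $\bm p_{\mathrm{int}}$.

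The only mildly delicate point is Step 3, namely invertibility on $E$ rather than on all of $\Hh$. Everything else is bookkeeping with nested projections.
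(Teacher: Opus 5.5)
Your Steps 1--3 are the paper's proof. You split $\Hh_J=\Hh_0\oplus E\oplus R_J$, use $E_j\subseteq E$ to get $\bm p_j=P_{E_j}P_E\bm h$, then apply $A_E^*$ and invert $S_E$; your coercivity argument only makes explicit the invertibility the paper takes from Assumption~\ref{ass:fusion_bound}.

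Step 4 formalizes a claim the paper merely asserts. One caution: $\bm h+\bm r$ is a genuine ($\F$-predictable) market price of risk only when $\bm r$ is $\F$-predictable. That is automatic if $\D^J\subseteq\F$, as in the paper's examples, but not in general. Otherwise your argument shows that $\bm p$ is blind to the $R_J$-component of any element of $\Hh$, not that two admissible markets with equal shadows but different $\bm p_{\mathrm{int}}$ exist.
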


\begin{proof}
Because $\Hh_J=\Hh_0\oplus E_J$ and
$E_J=E\oplus R_J$,
\[
 P_J\bm h=P_0\bm h+P_E\bm h+P_{R_J}\bm h,
\]
which proves \eqref{eq:pooled_split}.  For every $j$, $E_j\subseteq E$, and
hence
\[
 \bm p_j=P_{E_j}\bm h=P_{E_j}P_E\bm h
 =P_{E_j}\bm p_{\mathrm{add}}.
\]
This is $\bm p=A_E\bm p_{\mathrm{add}}$.  Applying $A_E^*$ and then
$S_E^{-1}$ gives the reconstruction formula.
\end{proof}

\subsection{Compatibility of standalone causal shadows}

Collect the attainable standalone causal shadows in
\begin{equation}\label{eq:causal_tuple}
 \bm c:=(\bm c_1,\ldots,\bm c_m)
 \in\bigoplus_{j=1}^mE_j.
\end{equation}
Define their least-squares common-book reconstruction and incompatibility
residual by
\begin{equation}\label{eq:chat_rho}
 \widehat{\bm c}:=A_E^\dagger\bm c
 =S_E^{-1}\sum_{j=1}^m\bm c_j,
 \qquad
 \bm\rho_c:=(I-A_EA_E^\dagger)\bm c.
\end{equation}
Then
\begin{equation}\label{eq:causal_range_split}
 \bm c=A_E\widehat{\bm c}+\bm\rho_c,
 \qquad
 \bm\rho_c\perp\Ran(A_E).
\end{equation}

\begin{definition}[Shadow compatibility]
\label{def:compatibility}
The standalone causal tuple $\bm c$ is shadow-compatible when
\begin{equation}\label{eq:compatible}
 \bm c\in\Ran(A_E),
 \quad\text{equivalently}\quad
 \bm\rho_c=0.
\end{equation}
In that case $\widehat{\bm c}$ is the unique additive exposure whose local
shadows equal all the declared causal shadows.
\end{definition}

Compatibility is neither causal identification nor observational correctness.
It asks whether separately defined, implementable local estimands can be the
projections of one common additive traded exposure.

\begin{proposition}[Structural-to-book implementation decomposition]
\label{prop:structural_book_decomposition}
Let
\[
 \widetilde{\bm c}:=(\widetilde{\bm c}_1,\ldots,
 \widetilde{\bm c}_m),\qquad
 \bm\nu:=(\bm\nu_1,\ldots,\bm\nu_m).
\]
For every additive traded exposure $\bm x\in E$,
\begin{align}
 \sum_{j=1}^m
 \norm{\widetilde{\bm c}_j-P_{E_j}\bm x}^2
 ={}&\sum_{j=1}^m\norm{\bm\nu_j}^2
 +\norm{\bm\rho_c}_{\oplus E_j}^2 \notag\\
 &+\norm{A_E(\bm x-\widehat{\bm c})}_{\oplus E_j}^2.
 \label{eq:structural_book_decomposition}
\end{align}
Consequently,
\begin{equation}\label{eq:minimum_raw_causal_error}
 \inf_{\bm x\in E}
 \sum_{j=1}^m
 \norm{\widetilde{\bm c}_j-P_{E_j}\bm x}^2
 =\sum_{j=1}^m\norm{\bm\nu_j}^2
 +\norm{\bm\rho_c}_{\oplus E_j}^2,
\end{equation}
and the unique minimizer is $\bm x=\widehat{\bm c}$.  Thus failure to
implement the identified structural responses has two irreducible sources:
non-attainability inside the local traded spaces and incompatibility across
their attainable shadows.  The remaining common-book distortion is
implementable and vanishes at the least-squares reconstruction.
\end{proposition}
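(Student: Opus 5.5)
Two successive Pythagorean splittings give the identity, one inside each local space and one inside the direct sum $\bigoplus_j E_j$.

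\textbf{Step 1 (within each desk).} Fix $j$ and $\bm x\in E$. Write
\[
 \widetilde{\bm c}_j-P_{E_j}\bm x=\bm\nu_j+(\bm c_j-P_{E_j}\bm x),
\]
using $\widetilde{\bm c}_j=\bm c_j+\bm\nu_j$ from \eqref{eq:causal_nonattainable}. By \eqref{eq:cj}, $\bm c_j=P_{E_j}\widetilde{\bm c}_j$, so $\bm\nu_j=(I-P_{E_j})\widetilde{\bm c}_j\in E_j^\perp$. The second summand lies in $E_j$. Hence
\[
 \norm{\widetilde{\bm c}_j-P_{E_j}\bm x}^2=\norm{\bm\nu_j}^2+\norm{\bm c_j-P_{E_j}\bm x}^2 .
\]
One routine check is needed: $\widetilde{\bm c}_j\in\Hh$. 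This follows from Assumption~\ref{ass:causal}(C4), which gives a square-integrable predictable version, so the projection is defined. Summing over $j$ leaves $\norm{\bm c-A_E\bm x}_{\oplus E_j}^2$, where $A_E\bm x=(P_{E_1}\bm x,\ldots,P_{E_m}\bm x)$ by \eqref{eq:analysis_operator}.

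\textbf{Step 2 (across desks).} Insert \eqref{eq:causal_range_split}:
\[
 \bm c-A_E\bm x=\bm\rho_c+A_E(\widehat{\bm c}-\bm x).
\]
The second term lies in $\Ran(A_E)$ and $\bm\rho_c\perp\Ran(A_E)$, so Pythagoras in $\bigoplus_j E_j$ gives $\norm{\bm\rho_c}^2+\norm{A_E(\bm x-\widehat{\bm c})}^2$. Combined with Step~1, this is \eqref{eq:structural_book_decomposition}.

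The only point requiring care is that $A_EA_E^\dagger$ is the orthogonal projector onto $\Ran(A_E)$. Under Assumption~\ref{ass:fusion_bound}, $A_E$ is bounded below, so $\Ran(A_E)$ is closed and $S_E$ is invertible. With $A_E^\dagger=S_E^{-1}A_E^*$ from \eqref{eq:pseudoinverse}, the operator $A_ES_E^{-1}A_E^*$ is self-adjoint and idempotent, since $A_E^*A_E=S_E$. Its range is $\Ran(A_E)$. This justifies $\bm\rho_c\perp\Ran(A_E)$ as stated in \eqref{eq:causal_range_split}.

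\textbf{Minimization.} The first two terms of \eqref{eq:structural_book_decomposition} do not depend on $\bm x$. The third term is nonnegative and vanishes at $\bm x=\widehat{\bm c}\in E$, which gives \eqref{eq:minimum_raw_causal_error}. For uniqueness, suppose $A_E(\bm x-\widehat{\bm c})=0$. Then \eqref{eq:fusion_lower} gives $a_E\norm{\bm x-\widehat{\bm c}}^2\le0$, so $\bm x=\widehat{\bm c}$; injectivity of $A_E$ on $E$ is exactly what the lower fusion bound supplies.

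I expect no real obstacle. The step most worth stating explicitly is the projector property of $A_EA_E^\dagger$, because it needs the closed range and it carries the cross-desk orthogonality.
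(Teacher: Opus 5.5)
Your proof is correct and follows the paper's argument essentially step for step. You apply Pythagoras inside each $E_j$ using $\bm\nu_j\perp E_j$, then Pythagoras in $\bigoplus_j E_j$ using $\bm\rho_c\perp\Ran(A_E)$, and get uniqueness from injectivity of $A_E$ under the lower fusion bound. Your explicit check that $A_ES_E^{-1}A_E^*$ is the orthogonal projector onto the closed range $\Ran(A_E)$ fills in a step the paper only asserts when it introduces the range split defining $\bm\rho_c$.
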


\begin{proof}
Because $\bm c_j=P_{E_j}\widetilde{\bm c}_j$, the residual
$\bm\nu_j=\widetilde{\bm c}_j-\bm c_j$ is orthogonal to $E_j$.  Since
$P_{E_j}\bm x\in E_j$, Pythagoras gives
\[
 \norm{\widetilde{\bm c}_j-P_{E_j}\bm x}^2
 =\norm{\bm\nu_j}^2
 +\norm{\bm c_j-P_{E_j}\bm x}^2.
\]
Summing over $j$ yields
\[
 \sum_{j=1}^m
 \norm{\widetilde{\bm c}_j-P_{E_j}\bm x}^2
 =\sum_{j=1}^m\norm{\bm\nu_j}^2
 +\norm{\bm c-A_E\bm x}_{\oplus E_j}^2.
\]
By \eqref{eq:causal_range_split},
\[
 \bm c-A_E\bm x
 =\bm\rho_c-A_E(\bm x-\widehat{\bm c}).
\]
The two terms on the right are orthogonal because
$\bm\rho_c\perp\Ran(A_E)$, proving
\eqref{eq:structural_book_decomposition}.  The lower fusion bound makes $A_E$
injective, so the last term is uniquely minimized at
$\bm x=\widehat{\bm c}$, which proves
\eqref{eq:minimum_raw_causal_error}.
\end{proof}

\section{Reference-valid financial--causal decentralization}
\label{sec:main_theorem}

\begin{definition}[Reference-valid exact causal decentralization]
\label{def:exact_decentralization}
Coalition $J$ admits reference-valid exact causal decentralization when:
\begin{enumerate}[label=(D\arabic*)]
\item $\F\hookrightarrow\G^J$; and
\item there exists $\bm x\in E$ such that
\begin{equation}\label{eq:exact_conditions}
 A_E\bm x=\bm c,
 \qquad
 P_0\bm h+\bm x=P_J\bm h.
\end{equation}
\end{enumerate}
\end{definition}

The first requirement protects the reference innovation after information is
pooled.  The second requires one additive self-financing exposure both to
realize every local causal shadow and to equal the pooled log-optimal exposure.

\begin{theorem}[Financial--causal decentralization equivalence and exact
obstruction decomposition]
\label{thm:decentralization}
Under Assumption~\ref{ass:fusion_bound}, the following statements are
equivalent:
\begin{enumerate}[label=(\roman*)]
\item coalition $J$ admits reference-valid exact causal decentralization;
\item all four conditions hold:
\begin{enumerate}[label=(\alph*)]
\item $\F\hookrightarrow\G^J$;
\item $\bm p_{\mathrm{int}}=0$;
\item $\bm c\in\Ran(A_E)$;
\item $\widehat{\bm c}=\bm p_{\mathrm{add}}$;
\end{enumerate}
\item $\F\hookrightarrow\G^J$, $\bm p_{\mathrm{int}}=0$, and
\begin{equation}\label{eq:compressed_condition}
 \bm c=\bm p.
\end{equation}
\end{enumerate}
When these conditions hold, the implementing exposure is unique and equals
\begin{equation}\label{eq:unique_book}
 \bm\theta_{\mathrm{dec}}^*
 =P_0\bm h+\bm p_{\mathrm{add}}=P_J\bm h.
\end{equation}

Assume now only pooled immersion, without imposing the remaining three
conditions.  Let
\begin{equation}\label{eq:assembled_exposure}
 \bm\theta_c:=P_0\bm h+\widehat{\bm c}.
\end{equation}
Then the exact pooled financial regret of the least-squares standalone
assembly is
\begin{equation}\label{eq:standalone_regret}
 \boxed{
 \sup_{\bm\theta\in\Hh_J}\mathcal J(\bm\theta)
 -\mathcal J(\bm\theta_c)
 =\frac12\norm{\bm p_{\mathrm{int}}}^2
  +\frac12\norm{\bm p_{\mathrm{add}}-\widehat{\bm c}}^2.}
\end{equation}
The discrepancy between local observational and causal shadows has the
orthogonal decomposition
\begin{equation}\label{eq:local_error_split}
 \boxed{
 \norm{\bm p-\bm c}_{\oplus E_j}^2
 =\norm{A_E(\bm p_{\mathrm{add}}-\widehat{\bm c})}_{\oplus E_j}^2
  +\norm{\bm\rho_c}_{\oplus E_j}^2.}
\end{equation}
Consequently,
\begin{equation}\label{eq:fusion_bounds_error}
 a_E\norm{\bm p_{\mathrm{add}}-\widehat{\bm c}}^2
 \leq
 \norm{\bm p-\bm c}^2-\norm{\bm\rho_c}^2
 \leq
 m\norm{\bm p_{\mathrm{add}}-\widehat{\bm c}}^2.
\end{equation}
\end{theorem}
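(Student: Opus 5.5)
The plan is to treat the equivalences and the two error identities as consequences of the orthogonal splittings already established: $\Hh_J=\Hh_0\oplus E$ with $E_J=E\oplus R_J$ (Proposition~\ref{prop:reconstruction}), and the range split $\bm c=A_E\widehat{\bm c}+\bm\rho_c$ in \eqref{eq:causal_range_split}. Throughout, I use that $A_E$ is injective with closed range under Assumption~\ref{ass:fusion_bound}, that $A_E^\dagger A_E=I_E$, and that $\bm p=A_E\bm p_{\mathrm{add}}$.

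\textbf{Equivalences.}
\begin{itemize}
\item For (i)$\Rightarrow$(ii), take $\bm x\in E$ satisfying \eqref{eq:exact_conditions}. By \eqref{eq:pooled_split}, the identity $P_0\bm h+\bm x=P_J\bm h$ gives $\bm x=\bm p_{\mathrm{add}}+\bm p_{\mathrm{int}}$. Since $\bm x-\bm p_{\mathrm{add}}\in E$ and $\bm p_{\mathrm{int}}\in R_J\perp E$, both sides vanish, so $\bm p_{\mathrm{int}}=0$ and $\bm x=\bm p_{\mathrm{add}}$.
\item Continuing (i)$\Rightarrow$(ii): $A_E\bm x=\bm c$ gives (c). Then $\widehat{\bm c}=A_E^\dagger A_E\bm x=\bm x=\bm p_{\mathrm{add}}$ gives (d), and (a) is (D1).
\item For (ii)$\Rightarrow$(iii), conditions (c) and (d) give $\bm c=A_E\widehat{\bm c}=A_E\bm p_{\mathrm{add}}=\bm p$.
\item For (iii)$\Rightarrow$(i), set $\bm x=\bm p_{\mathrm{add}}$. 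Then $A_E\bm x=\bm p=\bm c$, and $P_0\bm h+\bm x=P_J\bm h$ because $\bm p_{\mathrm{int}}=0$.
\item Uniqueness of the book \eqref{eq:unique_book} is already contained in the first step, since any implementing $\bm x$ was forced to equal $\bm p_{\mathrm{add}}$.
\end{itemize}

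\textbf{Regret formula.} Immersion enters only through the interpretation of $\mathcal J$. I will use the completed-square identity from the proof of Proposition~\ref{prop:local_value} with $A=J$:
\[
 \sup_{\Hh_J}\mathcal J-\mathcal J(\bm\theta)=\tfrac12\norm{\bm\theta-P_J\bm h}^2,\qquad \bm\theta\in\Hh_J .
\]
Note that $\bm\theta_c\in\Hh_J$, because $\widehat{\bm c}\in E\subseteq E_J$. Then
\[
 P_J\bm h-\bm\theta_c=(\bm p_{\mathrm{add}}-\widehat{\bm c})+\bm p_{\mathrm{int}},
\]
where the first summand lies in $E$ and the second in $R_J$. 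Pythagoras gives \eqref{eq:standalone_regret}.

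\textbf{Local error split and bounds.} For \eqref{eq:local_error_split}, write
\[
 \bm p-\bm c=A_E\bm p_{\mathrm{add}}-A_E\widehat{\bm c}-\bm\rho_c=A_E(\bm p_{\mathrm{add}}-\widehat{\bm c})-\bm\rho_c .
\]
Since $\bm\rho_c\perp\Ran(A_E)$, Pythagoras applies. This is the same computation as in Proposition~\ref{prop:structural_book_decomposition} with $\bm x=\bm p_{\mathrm{add}}$ and $\bm\nu=0$, so I can cite it directly. The bounds \eqref{eq:fusion_bounds_error} then follow by applying the lower fusion bound \eqref{eq:fusion_lower} and the upper bound \eqref{eq:fusion_upper} to $\bm y=\bm p_{\mathrm{add}}-\widehat{\bm c}\in E$, using $\norm{A_E\bm y}^2=\sum_j\norm{P_{E_j}\bm y}^2$.

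\textbf{Main obstacle.} The only delicate step is extracting both $\bm p_{\mathrm{int}}=0$ and $\bm x=\bm p_{\mathrm{add}}$ from the single identity $P_0\bm h+\bm x=P_J\bm h$. This requires that the splitting $\Hh_J=\Hh_0\oplus E\oplus R_J$ is genuinely orthogonal. That in turn relies on \eqref{eq:baseline_nesting}, on $E\subseteq E_J\subseteq\Hh_0^\perp$, and on $E$ being closed so that $R_J$ is its exact orthogonal complement within $E_J$. The proof should spell this out before invoking uniqueness.
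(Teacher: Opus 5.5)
Your proposal is correct and follows the paper's proof essentially step for step. The orthogonal splitting $\Hh_J=\Hh_0\oplus E\oplus R_J$ forces $\bm x=\bm p_{\mathrm{add}}$ and $\bm p_{\mathrm{int}}=0$, the range split $\bm c=A_E\widehat{\bm c}+\bm\rho_c$ together with $\bm p=A_E\bm p_{\mathrm{add}}$ gives the local error identity by Pythagoras, and completion of squares on $\Hh_J$ gives the regret. The differences are cosmetic: you close the cycle (i)$\Rightarrow$(ii)$\Rightarrow$(iii)$\Rightarrow$(i) instead of proving (i)$\Leftrightarrow$(ii) and (ii)$\Leftrightarrow$(iii) separately, and you obtain $\widehat{\bm c}=\bm x$ from $A_E^\dagger A_E=I_E$ rather than from injectivity of $A_E$, which is the same fact.
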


\begin{proof}
Proposition~\ref{prop:reconstruction} gives
\begin{equation}\label{eq:proof_pooled}
 P_J\bm h=P_0\bm h+\bm p_{\mathrm{add}}+\bm p_{\mathrm{int}}.
\end{equation}
Suppose Definition~\ref{def:exact_decentralization} holds, and let
$\bm x\in E$ satisfy \eqref{eq:exact_conditions}.  Comparing
\eqref{eq:exact_conditions} with \eqref{eq:proof_pooled} and using
$E\perp R_J$ gives
\begin{equation}\label{eq:x_padd}
 \bm x=\bm p_{\mathrm{add}},
 \qquad
 \bm p_{\mathrm{int}}=0.
\end{equation}
Because $A_E\bm x=\bm c$, the tuple $\bm c$ lies in $\Ran(A_E)$.  The lower
bound \eqref{eq:fusion_lower} makes $A_E$ injective, so the compatible
preimage is unique and
$\widehat{\bm c}=\bm x=\bm p_{\mathrm{add}}$.  Together with immersion,
this proves (i)$\Rightarrow$(ii).

Conversely, under (ii), compatibility gives
$A_E\widehat{\bm c}=\bm c$, while (b) and (d) yield
\[
 P_0\bm h+\widehat{\bm c}
 =P_0\bm h+\bm p_{\mathrm{add}}
 =P_J\bm h.
\]
Thus $\bm x=\widehat{\bm c}$ satisfies Definition
\ref{def:exact_decentralization}.  This proves (ii)$\Rightarrow$(i) and
also proves uniqueness.

Conditions (c) and (d) imply
\[
 \bm c=A_E\widehat{\bm c}=A_E\bm p_{\mathrm{add}}=\bm p.
\]
Conversely, if $\bm c=\bm p$, then Proposition~\ref{prop:reconstruction}
gives $\bm c=A_E\bm p_{\mathrm{add}}$, so $\bm c$ is compatible and its
unique preimage is $\widehat{\bm c}=\bm p_{\mathrm{add}}$.  Hence (ii) and
(iii) are equivalent.

Under pooled immersion, completion of squares on $\Hh_J$ gives
\begin{equation}\label{eq:completion_pooled}
 \mathcal J(\bm\theta)
 =\frac12\norm{P_J\bm h}^2
  -\frac12\norm{\bm\theta-P_J\bm h}^2,
 \qquad \bm\theta\in\Hh_J.
\end{equation}
Substitute $\bm\theta_c=P_0\bm h+\widehat{\bm c}$ and use the mutually
orthogonal spaces $\Hh_0$, $E$ and $R_J$ to obtain
\[
 \frac12\norm{P_J\bm h-\bm\theta_c}^2
 =\frac12\norm{\bm p_{\mathrm{int}}}^2
  +\frac12\norm{\bm p_{\mathrm{add}}-\widehat{\bm c}}^2,
\]
which is \eqref{eq:standalone_regret}.

Finally, \eqref{eq:causal_range_split} and
$\bm p=A_E\bm p_{\mathrm{add}}$ imply
\[
 \bm p-\bm c
 =A_E(\bm p_{\mathrm{add}}-\widehat{\bm c})-\bm\rho_c.
\]
The two terms on the right are orthogonal because
$\bm\rho_c\perp\Ran(A_E)$.  Pythagoras gives
\eqref{eq:local_error_split}; the lower bound
\eqref{eq:fusion_lower} and upper bound \eqref{eq:fusion_upper} give
\eqref{eq:fusion_bounds_error}.
\end{proof}

\begin{remark}[The three finite obstructions]
The terms in Theorem~\ref{thm:decentralization} have different financial
meanings.  The exposure $\bm p_{\mathrm{int}}$ is a pooled opportunity outside
the additive organization.  The difference
$\bm p_{\mathrm{add}}-\widehat{\bm c}$ is a negotiable distortion between two
implementable common books.  The residual $\bm\rho_c$ is different: no
additive traded exposure realizes all its local components.  Failure of
immersion is categorical rather than another squared-error term.
\end{remark}

\subsection{Necessity diagnostics}

The following examples verify that the four conditions in
Theorem~\ref{thm:decentralization} cannot be collapsed into one another.

\begin{example}[Algebraic exactness does not imply immersion]
\label{ex:no_immersion}
Let $\F$ be the Brownian filtration of $W$ and initially enlarge it by
$W_T$.  Take $\bm h=0$ and set every observational and causal shadow to zero.
Then $\bm p_{\mathrm{int}}=0$, $\bm c=\bm p=0$, and
$\bm\rho_c=0$.  Nevertheless, $\F$ is not immersed in the enlarged
filtration: the terminal value is known at time zero and $W$ acquires its
Brownian-bridge information drift.  Thus no finite Hilbert-space error can
replace the immersion condition.
\end{example}

\begin{example}[Pure pooled interaction]
\label{ex:pure_interaction}
Let $W,Z_1,Z_2$ be independent Brownian motions in $\F$, and let one traded
return satisfy
\[
 dR_t=\gamma Z_{1t}Z_{2t}\,dt+dW_t.
\]
Desk $j$ observes the history of $Z_j$, and the baseline is deterministic.
All driver filtrations are subfiltrations of $\F$, so immersion is automatic.
By independence and centering, $\bm p_1=\bm p_2=0$ and
$\bm p_{\mathrm{add}}=0$, whereas
$\bm p_{\mathrm{int}}=\gamma Z_1Z_2$.  Under the unconfounded structural
response $h_t(z_1,z_2)=\gamma z_1z_2$, the standalone standardized causal
shadows also vanish.  Local causal correctness and compatibility therefore
hold, but the additive organization loses
\[
 \frac12\norm{\bm p_{\mathrm{int}}}^2
 =\frac{\gamma^2T^3}{6}.
\]
\end{example}

\begin{example}[One coherent system can generate incompatible causal shadows]
\label{ex:incompatible}
Suppress time and take a complete one-factor market with all variables below
known in $\mathcal F_0$.  Let $U_1,U_2,\varepsilon_X$ be independent centered
Gaussian variables with unit variances for $U_1,U_2$, and set
\[
 X=U_1+U_2+\varepsilon_X,
 \qquad q_X:=\frac{\operatorname{Cov}(U_i,X)}{\operatorname{Var}(X)}>0.
\]
Choose centered jointly Gaussian $(\varepsilon_1,\varepsilon_2)$ independent
of $(U_1,U_2,\varepsilon_X)$, with variances large enough and covariance
$-\kappa^2q_X$.  Define
\[
 Z_1=\tau_1X+\kappa U_1+\varepsilon_1,
 \qquad
 Z_2=\tau_2X-\kappa U_2+\varepsilon_2,
 \qquad
 h=Z_1+Z_2.
\]
The residuals
\[
 r_1:=Z_1-\E[Z_1\mid X],
 \qquad
 r_2:=Z_2-\E[Z_2\mid X]
\]
are jointly Gaussian and orthogonal by the chosen covariance, hence
independent.  Desk 1 observes $(X,Z_1)$ and desk 2 observes $(X,Z_2)$.
The observational shadows have the same projection on their common subspace
$L_0^2(X)$:
\[
 p_1=(\tau_1+\tau_2)X+r_1,
 \qquad
 p_2=(\tau_1+\tau_2)X+r_2.
\]
The two standalone interventions generated by this same structural system are
\[
 c_1=Z_1+\tau_2X
     =(\tau_1+\tau_2+\kappa q_X)X+r_1,
\]
\[
 c_2=\tau_1X+Z_2
     =(\tau_1+\tau_2-\kappa q_X)X+r_2.
\]
For $\kappa\neq0$, their projections on the common $X$-subspace differ.  No
single exposure can therefore have both local shadows, so
$\bm\rho_c\neq0$.  This incompatibility is produced by one coherent causal
system, not by combining arbitrary estimands.
\end{example}

\begin{example}[Compatibility does not imply causal correctness]
\label{ex:compatible_confounded}
Let $U$ and $\varepsilon$ be independent centered Gaussian variables known at
time zero, set $Y=U+\varepsilon$, and let
\[
 h=\beta Y+\kappa U.
\]
There is one driver block generated by $Y$ and a deterministic baseline.
Writing
\[
 q_Y:=\frac{\operatorname{Cov}(U,Y)}{\operatorname{Var}(Y)}>0,
\]
its observational and interventional shadows are
\[
 p=(\beta+\kappa q_Y)Y,
 \qquad c=\beta Y.
\]
With one sleeve every causal shadow is compatible and there is no pooled-only
interaction.  Nevertheless $c\neq p$ when $\kappa\neq0$, and implementing the
causal shadow rather than the observationally optimal sleeve incurs regret
\[
\frac12\kappa^2q_Y^2\norm{Y}^2.
\]
\end{example}

\section{Joint causal mandates and decentralized implementation}
\label{sec:joint_implementation}

\subsection{Joint versus standalone interventions}

The standalone surfaces in Section~\ref{sec:local_shadows} do not determine the
effect of intervening on the pooled driver state.  Suppose a joint intervention
on $Y^J=(Y^1,\ldots,Y^m)$ is separately defined and identified under the
joint analogue of Assumption~\ref{ass:causal}.  Let
$\widetilde{\bm c}_J$ be its baseline-centered standardized response and
define its attainable pooled shadow by
\begin{equation}\label{eq:joint_attainable_shadow}
 \bm c_J:=P_{E_J}\widetilde{\bm c}_J\in E_J.
\end{equation}
Decompose it according to $E_J=E\oplus R_J$:
\begin{equation}\label{eq:joint_causal_split}
 \bm c_J=\bm c_{\mathrm{add}}+\bm c_{\mathrm{int}},
 \qquad
 \bm c_{\mathrm{add}}:=P_E\bm c_J,
 \qquad
 \bm c_{\mathrm{int}}:=P_{R_J}\bm c_J.
\end{equation}
Its compatible tuple of local shadows is
\begin{equation}\label{eq:joint_local_shadows}
 \overline{\bm c}:=A_E\bm c_{\mathrm{add}}.
\end{equation}
The independently standardized standalone tuple $\bm c$ need not equal
$\overline{\bm c}$.

\begin{proposition}[Joint causal implementation and the singleton gap]
\label{prop:joint_implementation}
Assume pooled immersion.  The regret from implementing the attainable joint
causal exposure $P_0\bm h+\bm c_J$ rather than the pooled log-optimal exposure
is
\begin{equation}\label{eq:joint_regret}
 \boxed{
 \sup_{\bm\theta\in\Hh_J}\mathcal J(\bm\theta)
 -\mathcal J(P_0\bm h+\bm c_J)
 =
 \frac12\norm{\bm p_{\mathrm{add}}-\bm c_{\mathrm{add}}}^2
 +\frac12\norm{\bm p_{\mathrm{int}}-\bm c_{\mathrm{int}}}^2.}
\end{equation}
Let
\begin{equation}\label{eq:singleton_gap}
 \bm q:=\bm c-\overline{\bm c}.
\end{equation}
Then the difference between the least-squares standalone assembly and the
additive part of the joint causal book satisfies
\begin{equation}\label{eq:joint_singleton_decomp}
 \boxed{
 \widehat{\bm c}-\bm c_{\mathrm{add}}=A_E^\dagger\bm q,
 \qquad
 \bm\rho_c=(I-A_EA_E^\dagger)\bm q.}
\end{equation}
Thus the standalone-versus-joint discrepancy contains an implementable
displacement of the common book and a non-realizable inconsistency among local
shadows.  Equation~\eqref{eq:joint_singleton_decomp} does not identify the
joint intervention from singleton interventions.
\end{proposition}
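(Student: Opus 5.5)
The proof has two independent parts: the regret identity \eqref{eq:joint_regret} and the algebraic singleton--joint decomposition \eqref{eq:joint_singleton_decomp}. Both follow the completion-of-squares and range-split template already used in the proof of Theorem~\ref{thm:decentralization}.

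For the regret, invoke pooled immersion and the completion-of-squares identity \eqref{eq:completion_pooled} on $\Hh_J$. The candidate $\bm\theta=P_0\bm h+\bm c_J$ lies in $\Hh_J$ because $\bm c_J\in E_J\subseteq\Hh_J$. Hence the regret equals $\frac12\norm{P_J\bm h-P_0\bm h-\bm c_J}^2$. By Proposition~\ref{prop:reconstruction}, $P_J\bm h-P_0\bm h=\bm p_{\mathrm{add}}+\bm p_{\mathrm{int}}$. Subtracting the split \eqref{eq:joint_causal_split} gives
\[
 (\bm p_{\mathrm{add}}-\bm c_{\mathrm{add}})+(\bm p_{\mathrm{int}}-\bm c_{\mathrm{int}}),
\]
where the first summand lies in $E$ and the second in $R_J$. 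Since $E\perp R_J$, Pythagoras yields \eqref{eq:joint_regret}.

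For the decomposition, start from $\bm c=\overline{\bm c}+\bm q=A_E\bm c_{\mathrm{add}}+\bm q$. The pseudoinverse $A_E^\dagger=S_E^{-1}A_E^*$ is linear, and under Assumption~\ref{ass:fusion_bound} it is a left inverse of $A_E$: $A_E^\dagger A_E=S_E^{-1}S_E=I_E$. Therefore
\[
 \widehat{\bm c}=A_E^\dagger\bm c=\bm c_{\mathrm{add}}+A_E^\dagger\bm q,
\]
which is the first identity. For the second, $A_EA_E^\dagger$ is the orthogonal projector onto the closed range $\Ran(A_E)$, so $I-A_EA_E^\dagger$ annihilates $A_E\bm c_{\mathrm{add}}$. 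Hence $\bm\rho_c=(I-A_EA_E^\dagger)\bm c=(I-A_EA_E^\dagger)\bm q$.

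The only point needing care is the projector claim. $A_EA_E^\dagger=A_ES_E^{-1}A_E^*$ is self-adjoint and idempotent, since
\[
 A_ES_E^{-1}A_E^*A_ES_E^{-1}A_E^*=A_ES_E^{-1}A_E^*,
\]
and its range is $\Ran(A_E)$. This range is closed by Assumption~\ref{ass:fusion_bound}, as already noted after \eqref{eq:fusion_operator}. Nothing here identifies $\bm c_J$ from $\bm c$: $\bm q$ depends on the separately supplied joint target. This remark needs no proof beyond noting that $\bm q$ is an input.
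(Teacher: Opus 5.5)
Your proposal is correct and follows the paper's proof essentially step for step. The regret comes from completion of squares on $\Hh_J$ and the orthogonal split $E\oplus R_J$, and the two identities come from the left-inverse property $A_E^\dagger A_E=I_E$ together with $\overline{\bm c}=A_E\bm c_{\mathrm{add}}\in\Ran(A_E)$. Your extra check that $A_EA_E^\dagger$ is the orthogonal projector onto $\Ran(A_E)$ is valid but not needed, since $A_EA_E^\dagger A_E=A_E$ already follows from the left-inverse property.
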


\begin{proof}
Completion of squares gives
\[
 \sup_{\Hh_J}\mathcal J
 -\mathcal J(P_0\bm h+\bm c_J)
 =\frac12\norm{
 \bm p_{\mathrm{add}}+\bm p_{\mathrm{int}}
 -\bm c_{\mathrm{add}}-\bm c_{\mathrm{int}}}^2.
\]
The additive and pooled-only differences belong to the orthogonal spaces
$E$ and $R_J$, proving \eqref{eq:joint_regret}.

Since $\overline{\bm c}=A_E\bm c_{\mathrm{add}}$ and
$A_E^\dagger A_E=I_E$,
\[
 A_E^\dagger\bm q
 =A_E^\dagger\bm c-\bm c_{\mathrm{add}}
 =\widehat{\bm c}-\bm c_{\mathrm{add}}.
\]
Moreover, $\overline{\bm c}\in\Ran(A_E)$, so
\[
 (I-A_EA_E^\dagger)\bm q
 =(I-A_EA_E^\dagger)\bm c
 =\bm\rho_c.
\]
\end{proof}

\subsection{Why a common benchmark cannot repair incompatibility}

A common benchmark changes the exposure shared by all desks but is orthogonal
to every baseline-centered local increment.  This makes causal-shadow
incompatibility fundamentally different from an incorrectly selected common
benchmark.

\begin{proposition}[No common benchmark can repair incompatible causal shadows]
\label{prop:no_benchmark}
For every baseline exposure $\bm b\in\Hh_0$ and every additive increment
$\bm x\in E$,
\begin{equation}\label{eq:benchmark_shadows}
 \bigl(P_{E_1}(\bm b+\bm x),\ldots,
       P_{E_m}(\bm b+\bm x)\bigr)=A_E\bm x.
\end{equation}
Consequently,
\begin{equation}\label{eq:no_benchmark}
 \boxed{
 \inf_{\substack{\bm b\in\Hh_0\\ \bm x\in E}}
 \sum_{j=1}^m
 \norm{P_{E_j}(\bm b+\bm x)-\bm c_j}^2
 =\norm{\bm\rho_c}^2.}
\end{equation}
The minimizer in the additive component is unique and equals
$\bm x=\widehat{\bm c}$.  Exact satisfaction of every standalone causal
mandate by a baseline-plus-additive traded exposure is possible if and only if
$\bm\rho_c=0$.
\end{proposition}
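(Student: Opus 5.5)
The argument has two parts. First we establish the shadow identity \eqref{eq:benchmark_shadows}. Then we reduce the benchmark-plus-increment minimisation to the additive-book minimisation already analysed in Proposition~\ref{prop:structural_book_decomposition}.

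For \eqref{eq:benchmark_shadows}, fix $j$ and note that $E_j=\Hh_j\cap\Hh_0^\perp\subseteq\Hh_0^\perp$. Hence $\Hh_0\perp E_j$ and $P_{E_j}\bm b=0$ for every $\bm b\in\Hh_0$. By linearity of $P_{E_j}$ we get $P_{E_j}(\bm b+\bm x)=P_{E_j}\bm x$. Since $\bm x\in E$, this is the $j$-th component of $A_E\bm x$ by \eqref{eq:analysis_operator}. This step needs only the baseline nesting \eqref{eq:baseline_nesting} and the definition \eqref{eq:Ej}.

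Substituting \eqref{eq:benchmark_shadows} into the objective, the quantity to minimise becomes $\norm{A_E\bm x-\bm c}_{\oplus E_j}^2$, which does not depend on $\bm b$. The infimum over $\bm b\in\Hh_0$ is therefore trivial. We then reuse the range split \eqref{eq:causal_range_split}:
\[
 A_E\bm x-\bm c=A_E(\bm x-\widehat{\bm c})-\bm\rho_c .
\]
Because $\bm\rho_c\perp\Ran(A_E)$, Pythagoras gives
\[
 \norm{A_E\bm x-\bm c}^2=\norm{A_E(\bm x-\widehat{\bm c})}^2+\norm{\bm\rho_c}^2 .
\]
Equivalently, this is \eqref{eq:structural_book_decomposition} with $\widetilde{\bm c}_j$ replaced by $\bm c_j$, so that every $\bm\nu_j=0$. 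The infimum is therefore at least $\norm{\bm\rho_c}^2$, and it is attained at $\bm x=\widehat{\bm c}$ with any $\bm b$, which proves \eqref{eq:no_benchmark}. Uniqueness of the additive minimiser follows because Assumption~\ref{ass:fusion_bound} makes $A_E$ injective: $\norm{A_E(\bm x-\widehat{\bm c})}^2\ge a_E\norm{\bm x-\widehat{\bm c}}^2$, which vanishes only when $\bm x=\widehat{\bm c}$.

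For the final equivalence, exact satisfaction of all mandates by some $\bm b+\bm x$ means the objective equals zero. Since the infimum is attained, this happens if and only if the minimum value $\norm{\bm\rho_c}^2$ is zero. Conversely, if $\bm\rho_c=0$, then $\bm b+\widehat{\bm c}$ achieves exactness for any $\bm b\in\Hh_0$.

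I expect no genuine obstacle. The only point needing care is the orthogonality $\Hh_0\perp E_j$, which is built into the baseline-centred definition \eqref{eq:Ej}; everything else is the range decomposition already established. The uniqueness claim must be stated for the additive component only, since $\bm b$ is entirely free.
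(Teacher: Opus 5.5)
Your proof is correct and follows the paper's argument essentially step for step. In both, the orthogonality $E_j\perp\Hh_0$ removes the benchmark, the range split \eqref{eq:causal_range_split} with Pythagoras gives the value $\norm{\bm\rho_c}^2$, and injectivity of $A_E$ from the lower fusion bound gives uniqueness of the additive minimizer; your explicit attainment argument for the final equivalence just spells out what the paper leaves implicit.
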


\begin{proof}
Every $E_j$ is orthogonal to $\Hh_0$, hence
$P_{E_j}\bm b=0$ and \eqref{eq:benchmark_shadows} follows.  By
\eqref{eq:causal_range_split},
\[
 \norm{A_E\bm x-\bm c}^2
 =\norm{A_E(\bm x-\widehat{\bm c})}^2
  +\norm{\bm\rho_c}^2.
\]
The lower fusion bound makes $A_E$ injective, so the first term is uniquely
minimized at $\bm x=\widehat{\bm c}$.
\end{proof}

\subsection{Optimal implementation under a causal-mandate penalty}

For $\eta>0$, consider an allocator restricted to baseline-plus-additive
exposures who minimizes pooled log-growth regret and a quadratic penalty for
violating the local causal shadows:
\begin{equation}\label{eq:penalized_loss}
 \mathcal L_\eta(\bm x)
 :=
 \sup_{\bm\theta\in\Hh_J}\mathcal J(\bm\theta)
 -\mathcal J(P_0\bm h+\bm x)
 +\frac{\eta}{2}\norm{A_E\bm x-\bm c}^2,
 \qquad \bm x\in E.
\end{equation}
The penalty is a declared implementation mandate.  It is not a
preference-free welfare criterion.

\begin{theorem}[Exact penalized causal-mandate solution]
\label{thm:penalized_mandate}
Assume pooled immersion.  The functional
\eqref{eq:penalized_loss} is strictly convex and has the unique minimizer
\begin{align}
 \bm x_\eta
 &=(I+\eta S_E)^{-1}
   \bigl(\bm p_{\mathrm{add}}+\eta S_E\widehat{\bm c}\bigr)
 \label{eq:xeta_first}\\
 &=\widehat{\bm c}
   +(I+\eta S_E)^{-1}
   (\bm p_{\mathrm{add}}-\widehat{\bm c}).
 \label{eq:xeta_second}
\end{align}
Writing $\bm d:=\bm p_{\mathrm{add}}-\widehat{\bm c}$, its minimized loss is
\begin{equation}\label{eq:min_penalized_loss}
 \boxed{
 \inf_{\bm x\in E}\mathcal L_\eta(\bm x)
 =
 \frac12\norm{\bm p_{\mathrm{int}}}^2
 +\frac{\eta}{2}\norm{\bm\rho_c}^2
 +\frac12
 \ip{\bm d}{
 \eta S_E(I+\eta S_E)^{-1}\bm d}.}
\end{equation}
Moreover,
\begin{align}
 \bm x_\eta&\longrightarrow\bm p_{\mathrm{add}}
 &&\text{as }\eta\downarrow0,\label{eq:eta_zero}\\
 \bm x_\eta&\longrightarrow\widehat{\bm c}
 &&\text{as }\eta\uparrow\infty.\label{eq:eta_infty}
\end{align}
If $\bm\rho_c\neq0$, the minimum loss diverges as
$\eta\norm{\bm\rho_c}^2/2$, even though the traded exposure converges to
$\widehat{\bm c}$.  Hard local causal compliance is feasible if and only if
$\bm\rho_c=0$.
\end{theorem}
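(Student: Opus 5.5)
First, use Theorem~\ref{thm:decentralization}, equation \eqref{eq:standalone_regret}, applied to a general $\bm x\in E$ instead of $\widehat{\bm c}$; this relies on pooled immersion. Completion of squares \eqref{eq:completion_pooled} with $\bm\theta=P_0\bm h+\bm x$, together with the orthogonal split $P_J\bm h=P_0\bm h+\bm p_{\mathrm{add}}+\bm p_{\mathrm{int}}$ from Proposition~\ref{prop:reconstruction}, gives
\[
 \sup_{\Hh_J}\mathcal J-\mathcal J(P_0\bm h+\bm x)
 =\tfrac12\norm{\bm p_{\mathrm{int}}}^2+\tfrac12\norm{\bm x-\bm p_{\mathrm{add}}}^2 .
\]
For the penalty term, the range split used in the proof of Proposition~\ref{prop:no_benchmark} gives $\norm{A_E\bm x-\bm c}^2=\norm{A_E(\bm x-\widehat{\bm c})}^2+\norm{\bm\rho_c}^2=\ip{\bm x-\widehat{\bm c}}{S_E(\bm x-\widehat{\bm c})}+\norm{\bm\rho_c}^2$. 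Hence
\[
 \mathcal L_\eta(\bm x)=\tfrac12\norm{\bm p_{\mathrm{int}}}^2+\tfrac\eta2\norm{\bm\rho_c}^2
 +\tfrac12\norm{\bm x-\bm p_{\mathrm{add}}}^2+\tfrac\eta2\ip{\bm x-\widehat{\bm c}}{S_E(\bm x-\widehat{\bm c})}.
\]
This is a quadratic form on $E$ with Hessian $I+\eta S_E$. Since $S_E\ge a_E I$ under Assumption~\ref{ass:fusion_bound} and $I+\eta S_E\ge I$, the functional is strictly convex and $I+\eta S_E$ is boundedly invertible on $E$.

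Next, set the Fréchet derivative to zero: $(\bm x-\bm p_{\mathrm{add}})+\eta S_E(\bm x-\widehat{\bm c})=0$. This yields \eqref{eq:xeta_first}. Writing $\bm p_{\mathrm{add}}=\widehat{\bm c}+\bm d$ gives \eqref{eq:xeta_second}. To get the minimal value, substitute $\bm y:=\bm x_\eta-\widehat{\bm c}=(I+\eta S_E)^{-1}\bm d$ into the last two terms of $\mathcal L_\eta$. Then $\bm x_\eta-\bm p_{\mathrm{add}}=\bm y-\bm d=-\eta S_E\bm y$, so the variable part equals
\[
 \tfrac12\ip{\bm y}{(\eta^2S_E^2+\eta S_E)\bm y}=\tfrac12\ip{\bm y}{\eta S_E(I+\eta S_E)\bm y}=\tfrac12\ip{\bm d}{\eta S_E(I+\eta S_E)^{-1}\bm d}.
\]
The last equality uses that $S_E$ and $(I+\eta S_E)^{-1}$ commute and are self-adjoint. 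This is \eqref{eq:min_penalized_loss}.

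The limits are the step that needs the most care, because the operator convergence must be uniform rather than merely strong; positivity of the spectrum of $S_E$ gives this. Use the spectral theorem, or just the operator bounds $a_E\le S_E\le m$. For $\eta\uparrow\infty$, we have $\norm{(I+\eta S_E)^{-1}}\le(1+\eta a_E)^{-1}\to0$, so \eqref{eq:xeta_second} gives \eqref{eq:eta_infty}. For $\eta\downarrow0$, we have $\norm{(I+\eta S_E)^{-1}-I}=\norm{\eta S_E(I+\eta S_E)^{-1}}\le\eta m\to0$, which gives \eqref{eq:eta_zero}. For the divergence claim, the last term of \eqref{eq:min_penalized_loss} is bounded: since $\eta S_E(I+\eta S_E)^{-1}\le I$, it lies between $0$ and $\frac12\norm{\bm d}^2$. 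So when $\bm\rho_c\ne0$ the minimum loss equals $\frac\eta2\norm{\bm\rho_c}^2+O(1)$, even though $\bm x_\eta$ converges.

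Finally, hard compliance means the existence of $\bm x\in E$ with $A_E\bm x=\bm c$, which is exactly $\bm c\in\Ran(A_E)$. By Definition~\ref{def:compatibility}, or equivalently Proposition~\ref{prop:no_benchmark}, this holds if and only if $\bm\rho_c=0$.
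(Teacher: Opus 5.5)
Your proposal is correct and follows the paper's proof essentially step for step: you use the same reduction of $\mathcal L_\eta$ to the constants $\tfrac12\norm{\bm p_{\mathrm{int}}}^2+\tfrac{\eta}{2}\norm{\bm\rho_c}^2$ plus a quadratic with Hessian $I+\eta S_E$, the same first-order condition, and the same substitution $\bm y=(I+\eta S_E)^{-1}\bm d$. You make explicit what the paper leaves to ``spectral calculus'' (via $\bm y-\bm d=-\eta S_E\bm y$) and quantify the limits through $a_E\le S_E\le m$; for these fixed-vector limits strong convergence would already suffice, so your caveat that the convergence must be uniform is not needed.
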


\begin{proof}
For $\bm x\in E$, orthogonality and
Theorem~\ref{thm:decentralization} give
\begin{equation}\label{eq:additive_financial_regret}
 \sup_{\Hh_J}\mathcal J-\mathcal J(P_0\bm h+\bm x)
 =\frac12\norm{\bm p_{\mathrm{int}}}^2
  +\frac12\norm{\bm p_{\mathrm{add}}-\bm x}^2.
\end{equation}
Proposition~\ref{prop:no_benchmark} gives
\begin{equation}\label{eq:causal_penalty_split}
 \norm{A_E\bm x-\bm c}^2
 =
 \ip{S_E(\bm x-\widehat{\bm c})}{
             \bm x-\widehat{\bm c}}
 +\norm{\bm\rho_c}^2.
\end{equation}
The variable part is a strictly convex quadratic.  Its first-order condition is
\[
 \bm x-\bm p_{\mathrm{add}}
 +\eta S_E(\bm x-\widehat{\bm c})=0,
\]
which proves \eqref{eq:xeta_first} and \eqref{eq:xeta_second}.

Put $\bm y=\bm x-\widehat{\bm c}$ and
$\bm d=\bm p_{\mathrm{add}}-\widehat{\bm c}$.  At the optimum,
$\bm y=(I+\eta S_E)^{-1}\bm d$.  The spectral calculus for the positive
operator $S_E$ yields
\[
 \frac12\norm{\bm d-\bm y}^2
 +\frac{\eta}{2}\ip{S_E\bm y}{\bm y}
 =
 \frac12\ip{\bm d}{
 \eta S_E(I+\eta S_E)^{-1}\bm d}.
\]
Adding the two constant terms in
\eqref{eq:additive_financial_regret}--\eqref{eq:causal_penalty_split}
proves \eqref{eq:min_penalized_loss}.  The limits follow from positivity and
bounded invertibility of $S_E$.  The final statements follow from
Proposition~\ref{prop:no_benchmark}.
\end{proof}

\subsection{A two-desk implementation with a shared state-dependent risk sleeve}
\label{sec:two_desk_application}

The preceding decomposition becomes financially transparent in a closed model.
Consider the complete one-asset market
\begin{equation}\label{eq:application_market}
 dR_t=h\,dt+dW_t,
\end{equation}
on $[0,T]$.  Let $X_0,X_1,X_2$ be independent Rademacher variables, each taking
the values $-1$ and $1$ with equal probability, known at time zero and
contained in $\mathcal F_0$.  The baseline driver filtration is
trivial, desk 1 is mandated to respond to $(X_0,X_1)$, desk 2 is mandated to
respond to $(X_0,X_2)$, and the allocator may respond to
$(X_0,X_1,X_2)$.  Because every driver filtration is a subfiltration of $\F$,
all coalitions are information-admissible.  The desk distinction is therefore
an organizational restriction on signal-responsive exposures, not a claim that
the reference market has different physical innovations across desks.

After removing the deterministic baseline, the common signal direction is
$X_0$.  Directions involving $X_1$ but not $X_2$ are specific to desk 1,
directions involving $X_2$ but not $X_1$ are specific to desk 2, and the
interaction $X_1X_2$ belongs to $R_J$.  In particular, on the modes carrying
the targets below,
\begin{equation}\label{eq:application_spaces}
 E_1\supset\operatorname{span}\{X_0,X_1\},\qquad
 E_2\supset\operatorname{span}\{X_0,X_2\},\qquad
 S_E=\operatorname{diag}(2,1,1)
\end{equation}
in the orthogonal basis $(X_0,X_1,X_2)$.  The eigenvalue two records that the
common risk sleeve is monitored by both desks; each desk-specific sleeve is
monitored once.

Let the market price of risk and the two identified attainable causal shadows
be
\begin{align}
 h&=\alpha X_0+\beta_1X_1+\beta_2X_2+\gamma X_1X_2,
 \label{eq:application_h}\\
 c_1&=(a+\kappa)X_0+b_1X_1,
 &c_2&=(a-\kappa)X_0+b_2X_2.
 \label{eq:application_c}
\end{align}
Take the identified centered intervention surfaces to be attainable in this
example, so $\widetilde c_j=c_j$ and $\nu_j=0$.  The example thereby isolates
cross-desk incompatibility from within-desk non-attainability.
The parameter $\gamma$ is the pooled opportunity unavailable to either
singleton desk.  The parameter $\kappa$ is the differential prescription on
the shared macro sleeve produced by the two separately standardized
interventions.  Direct calculation gives
\begin{align}
 \bm p_{\mathrm{add}}
 &=\alpha X_0+\beta_1X_1+\beta_2X_2,
 &\bm p_{\mathrm{int}}&=\gamma X_1X_2,
 \label{eq:application_p}\\
 \widehat{\bm c}
 &=aX_0+b_1X_1+b_2X_2,
 &\bm\rho_c&=(\kappa X_0,-\kappa X_0),
 \label{eq:application_chat}\\
 \norm{\bm p_{\mathrm{int}}}^2&=T\gamma^2,
 &\norm{\bm\rho_c}^2&=2T\kappa^2.
 \label{eq:application_norms}
\end{align}
Thus $\gamma$ is an opportunity-set loss, while $\kappa$ is an infeasibility
certificate.  Neither can be removed by changing the common benchmark.

For the causal-mandate penalty in Theorem~\ref{thm:penalized_mandate}, the
implemented exposure is
\begin{align}
 \bm x_\eta
 ={}&\left(a+\frac{\alpha-a}{1+2\eta}\right)X_0
 +\left(b_1+\frac{\beta_1-b_1}{1+\eta}\right)X_1 \notag\\
 &+\left(b_2+\frac{\beta_2-b_2}{1+\eta}\right)X_2,
 \label{eq:application_xeta}
\end{align}
and the minimized loss is
\begin{align}
 \inf_E\mathcal L_\eta
 ={}&\frac{T}{2}\gamma^2+\eta T\kappa^2 \notag\\
 &+\frac{T}{2}\left\{
 \frac{2\eta}{1+2\eta}(\alpha-a)^2
 +\frac{\eta}{1+\eta}
 \left[(\beta_1-b_1)^2+(\beta_2-b_2)^2\right]
 \right\}.
 \label{eq:application_loss}
\end{align}
The common exposure has twice the initial penalty sensitivity of either
desk-specific exposure and is shrunk more strongly toward its causal target.
In contrast,
$\bm x_\eta$ is independent of $\kappa$: a traded common book cannot move
toward two different prescriptions on the same risk direction.  Tightening the
mandate only raises the irreducible loss by $\eta T\kappa^2$.

For illustration, Table~\ref{tab:application_sensitivity} evaluates the exact
formulas at $T=1$, $\alpha=0.9$, $a=0.3$, $\beta_1=0.6$, $b_1=0.2$,
$\beta_2=-0.3$, $b_2=0.1$, $\gamma=0.25$, and $\kappa=0.15$.  These values are
not a calibration; they isolate the comparative statics of the three
obstructions.  Because $\eta$ changes the declared objective, the last column
is not a welfare ranking across penalty regimes.
\begin{table}[t]
\centering
\small
\caption{Exact two-desk penalty sensitivity.}
\label{tab:application_sensitivity}
\begin{tabular}{@{}ccccc@{}}
\toprule
$\eta$ & common $X_0$ & desk 1 $X_1$ & desk 2 $X_2$ &
$\inf_E\mathcal L_\eta$\\
\midrule
$0$ (limit) & 0.9000 & 0.6000 & $-0.3000$ & 0.03125\\
$0.5$       & 0.6000 & 0.4667 & $-0.1667$ & 0.18583\\
$2$         & 0.4200 & 0.3333 & $-0.0333$ & 0.32692\\
$10$        & 0.3286 & 0.2364 & 0.0636 & 0.57313\\
$\infty$    & 0.3000 & 0.2000 & 0.1000 & $+\infty$\\
\bottomrule
\end{tabular}
\end{table}

In this model exact decentralization holds if and only if
\begin{equation}\label{eq:application_exact}
 \gamma=0,\qquad \kappa=0,\qquad
 \alpha=a,\qquad \beta_1=b_1,\qquad \beta_2=b_2.
\end{equation}
Each equality removes a different obstruction: pooled-only opportunity,
incompatible shared prescriptions, or a negotiable discrepancy between two
implementable books.

\section{Commonality, stability, and admissible coalitions}
\label{sec:commonality_admissibility}

\subsection{Exact commonality and near-redundant representations}

Consider two local increment spaces $E_1,E_2$ and write
\[
 I:=E_1\cap E_2,\qquad
 M:=E_1\cap I^\perp,\qquad
 N:=E_2\cap I^\perp.
\]
The cosine of their Friedrichs angle is
\begin{equation}\label{eq:friedrichs_cosine}
 c_F:=
 \sup\left\{
 |\ip{\bm x}{\bm y}|:
 \bm x\in M,\ \bm y\in N,\
 \norm{\bm x}=\norm{\bm y}=1
 \right\},
\end{equation}
with $c_F=0$ if $M=\{0\}$ or $N=\{0\}$.

\begin{proposition}[Commonality correction and reconstruction stability]
\label{prop:commonality_stability}
The fusion operator satisfies
\begin{equation}\label{eq:S_on_intersection}
 S_E\bm u=2\bm u,\qquad \bm u\in I.
\end{equation}
Hence, if the same common component $\bm u\in I$ appears in both local
shadows, its reconstruction is exactly $\bm u$, not $2\bm u$.

On $E\cap I^\perp$,
\begin{equation}\label{eq:friedrichs_lower}
 \ip{S_E\bm x}{\bm x}
 \geq(1-c_F)\norm{\bm x}^2.
\end{equation}
In particular, if $c_F<1$, the non-common sum is closed and, whenever that
component is nonzero,
\begin{equation}\label{eq:pseudoinverse_stability}
 \norm{A_E^\dagger}
 \leq\frac{1}{\sqrt{1-c_F}}.
\end{equation}
For a perturbation $\Delta\bm c$ of the local causal tuple,
\begin{equation}\label{eq:reconstruction_perturbation}
 \norm{\Delta\widehat{\bm c}}
 \leq\norm{A_E^\dagger}\norm{\Delta\bm c}.
\end{equation}
Thus exact commonality is corrected without duplication, while distinct
representations that become nearly redundant can make the reconstructed book
unstable.
\end{proposition}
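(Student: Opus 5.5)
The plan is to work entirely inside the two-sleeve fusion system. Throughout, $E=\overline{E_1+E_2}$ and $S_E=P_{E_1}|_E+P_{E_2}|_E$.

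\emph{Intersection eigenvalue.} For $\bm u\in I=E_1\cap E_2$ both projections fix $\bm u$, so $S_E\bm u=2\bm u$. Since $S_E$ is self-adjoint on $E$ and leaves $I$ invariant, it also leaves $E\cap I^\perp$ invariant. Now suppose a common component $\bm u$ appears in both local shadows, so that $\bm c_1=\bm c_2=\bm u$. Then $\sum_j\bm c_j=2\bm u$, and applying $S_E^{-1}$ gives $\widehat{\bm c}=\bm u$. This is just \eqref{eq:chat_rho} combined with \eqref{eq:S_on_intersection}.

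\emph{Lower bound on $E\cap I^\perp$.} I will use $P_{E_j}=P_I+P_{M}$ or $P_I+P_N$; this holds because $I\subseteq E_j$, so $E_1=I\oplus M$ and $E_2=I\oplus N$. For $\bm x\in E\cap I^\perp$ we have $P_I\bm x=0$, hence
\[
\ip{S_E\bm x}{\bm x}=\norm{P_M\bm x}^2+\norm{P_N\bm x}^2=\ip{(P_M+P_N)\bm x}{\bm x}.
\]
The standard two-projection estimate is
\[
\ip{(P_M+P_N)\bm x}{\bm x}\ge(1-c_F)\norm{P_{\overline{M+N}}\bm x}^2,
\]
where $c_F=\norm{P_MP_N}$ because $M\cap N=\{0\}$. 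I would derive it from the spectral identity for $P_M+P_N$: its spectrum away from $0$ and $2$ lies in $[1-\norm{P_MP_N},1+\norm{P_MP_N}]$. This follows from $(P_M+P_N-I)^2=I-P_M-P_N+P_MP_N+P_NP_M+\dots$, or more concretely from Halmos' two-subspace decomposition. Alternatively, write $\bm x=\lim(\bm m_k+\bm n_k)$ and use $\norm{\bm m+\bm n}^2\ge(1-c_F)(\norm{\bm m}^2+\norm{\bm n}^2)$.

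It then remains to check that $E\cap I^\perp=\overline{M+N}$. The inclusion $\supseteq$ is clear. For $\subseteq$, use $E_1+E_2=I+M+N$ and project by $P_{I^\perp}$, which fixes $M$ and $N$. This identification, together with getting the sharp constant $1-c_F$ rather than $(1-c_F)^2$, is the step I expect to be the main obstacle. I would settle it with the Halmos decomposition, where on each generic $2\times2$ block with angle $\phi$ the sum $P_M+P_N$ has eigenvalues $1\pm\cos\phi$.

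\emph{Closedness, pseudoinverse and perturbation.} When $c_F<1$, the inequality $\norm{\bm m+\bm n}^2\ge(1-c_F)(\norm{\bm m}^2+\norm{\bm n}^2)$ shows that $M+N$ is closed. Consequently $E=I\oplus(M+N)$, and $S_E\ge\min(2,1-c_F)=1-c_F$ on all of $E$. Next,
\[
\norm{A_E^\dagger}^2=\norm{S_E^{-1}A_E^*}^2=\norm{S_E^{-1}A_E^*A_ES_E^{-1}}=\norm{S_E^{-1}}\le(1-c_F)^{-1},
\]
which gives \eqref{eq:pseudoinverse_stability}. The qualifier ``whenever that component is nonzero'' only excludes the degenerate case $E=I$, where the bound is $1/\sqrt2$ anyway. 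Finally, \eqref{eq:reconstruction_perturbation} is immediate from linearity, since $\Delta\widehat{\bm c}=A_E^\dagger\Delta\bm c$.
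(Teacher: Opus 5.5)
Your proof is correct. Its skeleton matches the paper's: the eigenvalue $2$ on $I$, the reduction to $M$ and $N$, closedness of $M+N$ from $\norm{\bm m+\bm n}^2\ge(1-c_F)(\norm{\bm m}^2+\norm{\bm n}^2)$, and $\norm{A_E^\dagger}^2=\norm{S_E^{-1}}$, which is the paper's smallest-singular-value statement.

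The difference is how you obtain the central bound \eqref{eq:friedrichs_lower}. The paper never uses Halmos' canonical form. It takes the synthesis operator $T(\bm m,\bm n)=\bm m+\bm n$ on $M\oplus N$ and reads the Friedrichs inequality as $T^*T\ge(1-c_F)I$. It then transfers this constant to $TT^*=P_M+P_N$ on $M+N$, because $T^*T$ and $TT^*$ have the same nonzero spectrum. So one elementary inequality gives both closedness and the sharp constant $1-c_F$, which settles the $(1-c_F)$ versus $(1-c_F)^2$ worry you flagged.

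Your ``alternative'' route also closes without Halmos once you add a duality step. For $\bm x=\bm m+\bm n$,
$\norm{\bm x}^2=\ip{P_M\bm x}{\bm m}+\ip{P_N\bm x}{\bm n}\le(\norm{P_M\bm x}^2+\norm{P_N\bm x}^2)^{1/2}(\norm{\bm m}^2+\norm{\bm n}^2)^{1/2}$. Inserting $\norm{\bm m}^2+\norm{\bm n}^2\le\norm{\bm x}^2/(1-c_F)$ then gives the bound directly.

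The Halmos route is heavier machinery but yields more: the exact spectrum $1\pm\cos\phi$ on generic blocks shows that $1-c_F$ cannot be improved.

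Your write-up is also more explicit than the paper where the paper only says ``the problem reduces to the two subspaces $M$ and $N$.'' You verify $P_{E_1}=P_I+P_M$ and $P_{E_2}=P_I+P_N$, and $E\cap I^\perp=\overline{M+N}$. You also extend $S_E\ge 1-c_F$ to all of $E$ via the eigenvalue $2$ on $I$, which the paper's singular-value remark uses tacitly.
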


\begin{proof}
For $\bm u\in I$, both local projectors fix $\bm u$, proving
\eqref{eq:S_on_intersection}.  Therefore
$S_E^{-1}(\bm u+\bm u)=\bm u$.

On $I^\perp$, the problem reduces to the two subspaces $M$ and $N$.
Let $T:M\oplus N\to M+N$ be the synthesis operator
$T(\bm m,\bm n)=\bm m+\bm n$.  From
\eqref{eq:friedrichs_cosine},
\[
 \norm{T(\bm m,\bm n)}^2
 \geq(1-c_F)(\norm{\bm m}^2+\norm{\bm n}^2).
\]
Thus $T$ is bounded below when $c_F<1$, and its range is closed.  Moreover,
\[
 T^*T=
 \begin{pmatrix}
 I & P_M|_N\\
 P_N|_M & I
 \end{pmatrix},
 \qquad TT^*=P_M+P_N.
\]
Their nonzero spectra coincide, so the lower spectral bound $1-c_F$ transfers
to $P_M+P_N$ on $M+N$, proving \eqref{eq:friedrichs_lower}.  The smallest
singular value of $A_E$ is the square root of the smallest spectral value of
$S_E=A_E^*A_E$, which proves \eqref{eq:pseudoinverse_stability}.  Finally,
$\Delta\widehat{\bm c}=A_E^\dagger\Delta\bm c$ gives
\eqref{eq:reconstruction_perturbation}.
\end{proof}

\begin{remark}[No equilibrium interpretation]
Proposition~\ref{prop:commonality_stability} is a conditioning result for
overlapping representations.  It does not establish equilibrium crowding,
endogenous price impact, liquidity feedback, or reflexivity.  Those mechanisms
require additional state equations.
\end{remark}

\subsection{The admissible coalition complex}

Let $V=\{1,\ldots,K\}$ index a finite universe of driver blocks.  For
$A\subseteq V$, let $\D^A$ be their pooled filtration and
$\G^A=\F\vee\D^A$.  Define
\begin{equation}\label{eq:Kadm}
 \mathfrak K_{\mathrm{adm}}
 :=\{A\subseteq V:\F\hookrightarrow\G^A\}.
\end{equation}

\begin{proposition}[Admissibility complex and the domain of the financial theorem]
\label{prop:admissibility_complex}
If the baseline coalition is information-admissible, then
$\mathfrak K_{\mathrm{adm}}$ is an abstract simplicial complex:
\[
 A\in\mathfrak K_{\mathrm{adm}},\quad B\subseteq A
 \quad\Longrightarrow\quad
 B\in\mathfrak K_{\mathrm{adm}}.
\]
The financial log-growth statements in
Theorem~\ref{thm:decentralization},
Proposition~\ref{prop:joint_implementation}, and
Theorem~\ref{thm:penalized_mandate} are licensed on the faces of this complex.
A minimal nonface is a minimal pooled-information obstruction to immersion.
It is not, without additional assumptions, proof of arbitrage or an infinite
utility penalty.
\end{proposition}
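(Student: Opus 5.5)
The plan is to prove downward closure directly from the definition of immersion and the monotonicity $\D^B\subseteq\D^A$ for $B\subseteq A$. The other assertions in the proposition are interpretive: the licensing statement and the remark about minimal nonfaces. They will follow once closure is established. The hypothesis that the baseline is admissible will only be used to guarantee that $\mathfrak K_{\mathrm{adm}}$ is nonempty and contains the vertex-free face, so that the complex is a genuine abstract simplicial complex.

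\emph{Key step: the tower argument.} Fix $A\in\mathfrak K_{\mathrm{adm}}$ and $B\subseteq A$. Then $\F\subseteq\G^B=\F\vee\D^B\subseteq\F\vee\D^A=\G^A$, since after completion and right-continuous regularization the join is still monotone in the driver filtration. Let $M$ be a bounded $(\Pp,\F)$-martingale. By admissibility of $A$, $M$ is a $(\Pp,\G^A)$-martingale. It is $\G^B$-adapted because it is $\F$-adapted. For $s\le t$, the tower property gives
\[
\E[M_t\mid\mathcal G^B_s]=\E\bigl[\E[M_t\mid\mathcal G^A_s]\mid\mathcal G^B_s\bigr]=\E[M_s\mid\mathcal G^B_s]=M_s .
\]
Hence $M$ is a $(\Pp,\G^B)$-martingale, so $\F\hookrightarrow\G^B$ and $B\in\mathfrak K_{\mathrm{adm}}$. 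This is the standard fact that immersion passes to intermediate filtrations. The one technical point I expect to need care is that the usual-conditions regularization preserves the inclusions $\G^B_s\subseteq\G^A_s$. Completion by the same null sets and taking right-continuous versions are both monotone operations, so this should be routine.

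\emph{Licensing statement.} Theorem~\ref{thm:decentralization}, Proposition~\ref{prop:joint_implementation} and Theorem~\ref{thm:penalized_mandate} use pooled immersion only through the identification \eqref{eq:J} of $\mathcal J$ with expected log-growth via the reference Brownian innovation. For a coalition $A\in\mathfrak K_{\mathrm{adm}}$, $W$ remains a $\G^A$-Brownian motion, and every subcoalition used inside those results (singletons, the baseline) is again a face by closure. So the log-growth interpretation holds for every space $\Hh_B$, $B\subseteq A$, that enters the statements.

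\emph{Minimal nonface.} A minimal nonface is a set $A\notin\mathfrak K_{\mathrm{adm}}$ all of whose proper subsets are faces. It is therefore the smallest pooling at which immersion fails. The claim that this is not a proof of arbitrage or infinite utility only needs a witness, and Example~\ref{ex:no_immersion} provides one. Enlarging by $W_T$ breaks immersion, yet with $\bm h=0$ every Hilbert-space quantity is finite. This matches the earlier remark that failure of Definition~\ref{def:immersion} is categorical rather than a quantified penalty.
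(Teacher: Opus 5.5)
Your downward-closure argument is the paper's proof: the inclusion $\mathcal G^B_s\subseteq\mathcal G^A_s$ followed by the tower property. Like the paper, you dispose of the licensing sentence by pointing to where immersion enters the cited results. Your reading of the baseline hypothesis is also consistent with the paper, whose proof never uses it: it only makes the complex nonempty.

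The step that fails is your witness for the final sentence. Example~\ref{ex:no_immersion} is the standard case in which loss of immersion \emph{does} give arbitrage and infinite utility to the enlarged-information investor.
\begin{itemize}
\item $W_T$ is known at time zero in the enlarged filtration $\G$, so the bounded $\G$-predictable position $\operatorname{sign}(W_T)$ in the driftless asset $dR=dW$ has gains $\int_0^T\operatorname{sign}(W_T)\,dR_t=|W_T|>0$ almost surely from zero capital.
\item The information drift $(W_T-W_t)/(T-t)$ satisfies $\E\int_0^T (W_T-W_t)^2(T-t)^{-2}\,dt=\int_0^T(T-t)^{-1}\,dt=\infty$, so $\G$-log-optimal growth is infinite.
\end{itemize}
What is finite in that example is the paper's functional $\mathcal J$ and its projections, which are built from the reference $\bm h=0$. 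That is exactly the object whose log-growth interpretation lapses once immersion fails, so its finiteness says nothing about arbitrage or utility under $\G$. In the paper, the example shows that the Hilbert-space errors cannot detect immersion failure. It does not show that such failure is financially harmless.

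If you want a genuine witness, enlarge instead by a noisy terminal signal $L=W_T+\varepsilon$, with $\varepsilon\sim N(0,v)$ independent of $W$ and $v>0$. Then:
\begin{itemize}
\item $W$ acquires the $\G$-drift $(L-W_t)/(T-t+v)$, so immersion fails;
\item the conditional law of $L$ given $\mathcal F_T$ is equivalent to its unconditional law, so the associated decoupling measure is an equivalent $\G$-martingale measure on $[0,T]$ and there is no arbitrage;
\item the additional log-utility is $\tfrac12\log(1+T/v)<\infty$.
\end{itemize}
Alternatively, read the sentence as the paper does. It is a disclaimer that the framework uses immersion only as a domain restriction and infers nothing from its failure, and that reading needs no witness at all.
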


\begin{proof}
Let $B\subseteq A$ and assume $\F\hookrightarrow\G^A$.  Then
$\mathcal G_s^B\subseteq\mathcal G_s^A$.  For every bounded
$\F$-martingale $M$ and $s\leq t$,
\[
 \E[M_t\mid\mathcal G_s^B]
 =\E[\E[M_t\mid\mathcal G_s^A]\mid\mathcal G_s^B]
 =\E[M_s\mid\mathcal G_s^B]
 =M_s.
\]
Thus $\F\hookrightarrow\G^B$.  The remaining statements follow from the role
of immersion in the proofs of the cited financial results.
\end{proof}

\begin{remark}[Admissibility and decentralizability are different set systems]
The complex records only preservation of the reference martingales.  Exact
causal decentralizability additionally depends on the market price of risk,
the local spaces and the causal surfaces.  It therefore need not inherit the
same downward-closure property.
\end{remark}

\section{Continuous-semimartingale formulation}
\label{sec:semimartingale}

\subsection{Covariance-weighted predictable geometry}

Let discounted excess returns be the continuous semimartingale
\begin{equation}\label{eq:semimartingale_return}
 \bm R=\bm M+\bm A,
\end{equation}
where $\bm M$ is an $n$-dimensional continuous $\F$-local martingale.  Choose a
predictable increasing clock $B$ and a predictable positive-semidefinite matrix
$\bm c$ such that
\begin{equation}\label{eq:bracket_clock}
 d\langle\bm M\rangle_t=\bm c_t\,dB_t.
\end{equation}
Assume the structure condition
\begin{equation}\label{eq:structure_condition}
 d\bm A_t=\bm c_t\bm\lambda_t\,dB_t,
 \qquad
 \E\int_0^T\bm\lambda_t^\top\bm c_t\bm\lambda_t\,dB_t<\infty,
\end{equation}
where $\bm\lambda$ is chosen in $\Ran(\bm c)$.

On pooled predictable integrands define the seminorm
\begin{equation}\label{eq:c_inner_product}
 \ip{\bm\theta}{\bm\eta}_{c}
 :=\E\int_0^T
 \bm\theta_t^\top\bm c_t\bm\eta_t\,dB_t,
 \qquad
 \norm{\bm\theta}_{c}^2:=\ip{\bm\theta}{\bm\theta}_{c}.
\end{equation}
After quotienting by covariance-null integrands and completing, this is a
Hilbert space $\Hh^c$.  Let $\Hh_A^c$ be nested closed subspaces obtained as
the closures, in this quotient, of the declared
$\G^A$-predictable driver-restricted strategy classes.  Define their
projections, local increments, additive span, analysis operator and pooled-only
complement exactly as in Sections~\ref{sec:market_spaces}--\ref{sec:reconstruction},
but with the covariance-weighted inner product.

For $X=X_0\mathcal E(\bm\theta\mathbin{\cdot}\bm R)$, continuity gives
\begin{equation}\label{eq:semimartingale_log}
 \log\frac{X_T}{X_0}
 =
 \int_0^T\bm\theta_t^\top d\bm M_t
 +\int_0^T
 \left(
 \bm\theta_t^\top\bm c_t\bm\lambda_t
 -\frac12\bm\theta_t^\top\bm c_t\bm\theta_t
 \right)dB_t.
\end{equation}

\begin{proposition}[Continuous-semimartingale invariance]
\label{prop:semimartingale_invariance}
Assume the relevant stochastic integrals in
\eqref{eq:semimartingale_log} are true martingales, the covariance-weighted
fusion lower bound holds, and the relevant causal shadows satisfy the
semimartingale analogue of Assumption~\ref{ass:causal}.  On every coalition
$J\in\mathfrak K_{\mathrm{adm}}$, all projection, reconstruction, equivalence,
regret, joint-intervention, no-benchmark and penalized-mandate formulas in
Sections~\ref{sec:local_shadows}--\ref{sec:joint_implementation} remain valid
after replacing
\[
 \bm h,\quad \Hh_A,\quad \ip{\cdot}{\cdot},\quad\norm{\cdot}
\]
by
\[
 \bm\lambda,\quad \Hh_A^c,\quad
 \ip{\cdot}{\cdot}_c,\quad\norm{\cdot}_c,
\]
respectively.
\end{proposition}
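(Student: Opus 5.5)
The plan is to observe that every proof in Sections~\ref{sec:local_shadows}--\ref{sec:joint_implementation} uses only two kinds of input. The first is abstract Hilbert-space geometry: orthogonal projections onto nested closed subspaces, Pythagoras, the fusion lower bound, injectivity of $A_E$, and the spectral calculus for $S_E$. The second is a single financial identity: the log-growth criterion is the quadratic functional $\mathcal J(\bm\theta)=\ip{\bm\theta}{\bm h}-\frac12\norm{\bm\theta}^2$, whose maximizer on a closed subspace is the projection of $\bm h$. I would therefore establish the semimartingale analogue of this identity, check that the geometric objects are well defined in $\Hh^c$, and then argue that the earlier proofs transfer verbatim.

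\textbf{Step 1: the financial criterion.} Take expectations in \eqref{eq:semimartingale_log}. The martingale term vanishes by assumption, which gives
\[
 \E\log\frac{X_T}{X_0}=\mathcal J^c(\bm\theta):=\ip{\bm\theta}{\bm\lambda}_c-\frac12\norm{\bm\theta}_c^2 .
\]
The pairing $\ip{\bm\theta}{\bm\lambda}_c$ is finite by Cauchy--Schwarz and \eqref{eq:structure_condition}. It depends only on the class of $\bm\theta$ modulo covariance-null integrands, since $\bm c\bm\theta=0$ implies $\bm\theta^\top\bm c\bm\lambda=0$. Hence $\mathcal J^c$ is a well-defined continuous functional on $\Hh^c$. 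On a face $J\in\mathfrak K_{\mathrm{adm}}$, immersion keeps $\bm M$ an $\G^J$-local martingale with the same bracket and drift. This licenses the interpretation exactly as Definition~\ref{def:immersion} did, and Proposition~\ref{prop:admissibility_complex} extends it to all subcoalitions.

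\textbf{Step 2: the geometric setup.} I must check that $\bm\lambda$ defines an element of $\Hh^c_J$, or at least that $\ip{\cdot}{\bm\lambda}_c$ is a bounded functional on $\Hh^c_J$. By Riesz, this functional is represented by a unique element $P^c_J\bm\lambda$. I expect this to be the \emph{main obstacle}. In the Brownian case $\bm h$ was itself an element of $\Hh$. Here $\bm\lambda$ is only $\F$-predictable, and the quotient completion mixes filtrations.

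I would handle it by working in the pooled space $\Hh^c$ of $\G^J$-predictable integrands, which contains $\F$-predictable ones. There $\bm\lambda$ is an honest element with $\norm{\bm\lambda}_c<\infty$. All $P^c_A$ are then orthogonal projections in $\Hh^c$, and nestedness follows exactly as in Lemma~\ref{lem:nested_spaces}. The attainability projector $\bm\Pi^K$ is absorbed automatically, because the quotient by covariance-null directions plays its role. The causal shadows $\bm c_j:=P^c_{E_j}\widetilde{\bm c}_j$ are well defined once the semimartingale analogue of Assumption~\ref{ass:causal} supplies $\widetilde{\bm c}_j\in\Hh^c$.

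\textbf{Step 3: transfer.} Completing the square gives
\[
 \mathcal J^c(\bm\theta)=\frac12\norm{P^c_A\bm\lambda}_c^2-\frac12\norm{\bm\theta-P^c_A\bm\lambda}_c^2,\qquad \bm\theta\in\Hh^c_A,
\]
which is the only financial input used in Propositions~\ref{prop:local_value} and~\ref{prop:local_causal_value}, Theorem~\ref{thm:decentralization}, Proposition~\ref{prop:joint_implementation}, and Theorem~\ref{thm:penalized_mandate}. Everything else in those results is purely operator-theoretic: Propositions~\ref{prop:reconstruction}, \ref{prop:structural_book_decomposition} and~\ref{prop:no_benchmark}, the range split \eqref{eq:causal_range_split}, and the penalized first-order condition. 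These arguments use only that $E_j,E,R_J$ are closed subspaces of a Hilbert space, together with the covariance-weighted version of Assumption~\ref{ass:fusion_bound}, which is assumed. So each formula holds after substituting $\bm\lambda,\Hh^c_A,\ip{\cdot}{\cdot}_c,\norm{\cdot}_c$. I would conclude by noting that the trivial upper bound \eqref{eq:fusion_upper} still holds, since each $P^c_{E_j}$ is a contraction.
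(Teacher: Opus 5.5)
Your proposal is correct and is essentially the paper's proof. You obtain $\mathcal J_c(\bm\theta)=\ip{\bm\theta}{\bm\lambda}_c-\frac12\norm{\bm\theta}_c^2$ from \eqref{eq:semimartingale_log} using immersion on $J\in\mathfrak K_{\mathrm{adm}}$ and the true-martingale assumption, and then note that every cited proof uses only nested closed subspaces, orthogonal projection, the quadratic completion and the fusion lower bound, all of which hold verbatim in $\Hh^c$. Your extra checks are details the paper leaves implicit: that $\bm\lambda$ lies in the pooled space $\Hh^c$ because $\F$-predictable integrands are $\G^J$-predictable, that $\ip{\cdot}{\bm\lambda}_c$ respects the quotient, and that \eqref{eq:fusion_upper} persists; the one loose point is your remark that the null-direction quotient replaces $\bm\Pi^K$, since attainability is in fact automatic because elements of $\Hh^c$ are asset-weight integrands, and that quotient is trivial whenever $\bm c$ is nondegenerate, as in the Brownian case.
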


\begin{proof}
Immersion preserves the $\F$-local martingale $\bm M$ in $\G^J$.
Under the stated integrability, the stochastic integral in
\eqref{eq:semimartingale_log} has zero expectation, and expected log growth is
\begin{equation}\label{eq:Jc}
 \mathcal J_c(\bm\theta)
 =\ip{\bm\theta}{\bm\lambda}_c
  -\frac12\norm{\bm\theta}_c^2.
\end{equation}
The proofs of all cited results use only nested closed subspaces, orthogonal
projection, the quadratic completion
\[
 \mathcal J_c(\bm\theta)
 =\frac12\norm{P_A^c\bm\lambda}_c^2
  -\frac12\norm{\bm\theta-P_A^c\bm\lambda}_c^2,
\]
the fusion lower bound and preservation of the martingale part.  These
properties hold in the quotient Hilbert space $\Hh^c$, so the proofs transfer
without modification.
\end{proof}

\section{Public-data audit and calibrated controlled experiment}
\label{sec:numerical_audit}

This section has two deliberately separate purposes.  The public-data audit
asks whether an overlap-corrected book has value as a strictly
post-information implementation proxy.  The generated experiment checks the
finite-sample reconstruction, incompatibility and conditioning mechanisms in
a controlled design.  Neither exercise tests immersion, identifies private
desk filtrations, or turns the projection identities into an equilibrium
model.

\subsection{Public-data implementation audit}

The monetary-policy event set combines the current-target surprise MP1 in the
US Monetary Policy Event-Study Database of \citet{AcostaEtAl2025} with the
median-rotation monetary-policy and central-bank-information shocks of
\citet{JarocinskiKaradi2020}.  The latter decomposition uses the opposite
high-frequency comovement of interest rates and equities to separate a pure
policy shock from information released by the central bank.  This distinction
is economically important here: the pure monetary-policy shock is the declared
common direction of two overlapping information blocks, not a generic
statistical factor.

For scheduled FOMC statements, the rates block contains the current-target
surprise, the pure policy shock and their product.  The risky-assets block
contains the pure policy shock, the central-bank-information shock and their
product.  The additive union therefore has five terms, while the product of
the current-target and information shocks is reserved as a pooled-only
candidate.  Every feature is centered on a common baseline containing an
intercept, the effective-lower-bound regime, SEP and press-conference
indicators, lagged VIX, and the lagged 10-year minus 2-year Treasury slope.

The traded proxies are SHY, IEF, TLT, SPY, HYG and GLD.  The outcome begins at
the first market open after the FOMC communication and ends at that day's
close, so the identifying information is observable before portfolio
formation.  Exposures are standardized by a \citet{LedoitWolf2004}
covariance estimate based on the 252 trading days strictly preceding each
event.  The training sample ends on 31 December 2017; the 65 events from 2018
onward are untouched out of sample.  One ridge penalty and the reconstruction
geometry are selected using the 78 training events and then frozen.  All book
comparisons are matched to the training risk of the pooled-additive fit.

For a standardized book $b_e$ and baseline residual $y_e$, define the
quadratic implementation score
\begin{equation}\label{eq:empirical_score}
 q_e(b_e):=2\ip{b_e}{y_e}-\norm{b_e}^2.
\end{equation}
This is the sample counterpart of the quadratic log-growth objective, but it
is reported as an implementation score rather than realized utility.  Table
\ref{tab:public_audit_pass} contains the results that passed their frozen
90\% circular-block-bootstrap criteria.

\begin{table}[t]
\centering
\small
\caption{Public-data results that pass the implementation-audit criteria.}
\label{tab:public_audit_pass}
\begin{tabular}{@{}lrrr@{}}
\toprule
Quantity & Estimate & 90\% lower & 90\% upper\\
\midrule
Fusion common-book score & 0.253 & 0.012 & 0.503\\
Fusion minus naive-sum score & 0.430 & 0.209 & 0.651\\
Fusion minus inclusion--exclusion score & 0.281 & 0.079 & 0.482\\
\bottomrule
\end{tabular}
\end{table}

The common-book score is positive, and the paired score differences reject
both double-counting by direct summation and an ad hoc
inclusion--exclusion correction.  Lead and lag assignments of the shock vector,
added as post-result adversarial timing checks, yield negative common-book
scores; they are not used to redefine the frozen gate.  Contemporaneous daily
returns from the Kenneth French Data Library provide a separate transmission
check.  At a 10\% false-discovery rate, the pure monetary-policy shock is
significant for 12 of 12 broad industries, 44 of 48 detailed industries, all
25 operating-profitability--investment portfolios, and 24 of 25
size--book-to-market portfolios.  The target and central-bank-information
components do not show comparable cross-sectional breadth.  This is evidence
of economically widespread price transmission, not evidence that the
post-announcement books are profitable.  The industry classification follows
the portfolio construction used by \citet{FamaFrench1997}.

The full empirical gate is not passed.  The pooled-additive score is 0.060
with a 90\% interval of $[-0.108,0.215]$, and the fusion book's net return after
a two-basis-point one-way cost is $-1.858$ basis points per event with interval
$[-4.660,1.049]$.  Adding the pooled-only interaction also lowers the
out-of-sample score.  Accordingly, Table~\ref{tab:public_audit_pass} is a
public-data illustration of commonality correction, not a claim of a
tradable premium, exact causal decentralization, or empirical validation of
Theorem~\ref{thm:decentralization}.

\subsection{Training-calibrated generated experiment}

The generated design is fixed before simulation and uses no test-period
outcome.  It resamples four-event circular blocks from the observed training
design and independently resamples four-event blocks of its six-dimensional
residual vectors, following the dependent-data bootstrap logic of
\citet{Kunsch1989}.  Thus the generator retains the empirical joint shock
distribution, interaction terms, regime clustering, residual tails and
cross-asset dependence; it does not impose Gaussian shocks.  Its conditional
mean is the training projection on the five-dimensional pooled-additive
space.  There are 2,000 replications of 78 training and 78 evaluation events,
with random seed 20260826.

The generated sample is not a second empirical data set.  Its purpose is to
ask whether the numerical implementation recovers known targets under the
same signal geometry and noise scale as the public training sample.  The
training geometry has dimensions $(3,3,5)$ for the two local and union spaces,
fusion lower bound 0.2274, upper bound 2, and condition number 8.795.
Table~\ref{tab:generated_pass} reports only prespecified successful
demonstrations.

\begin{table}[h!]
\centering
\small
\caption{Successful controlled demonstrations under the training calibration.}
\label{tab:generated_pass}
\begin{tabular}{@{}lrrr@{}}
\toprule
Demonstration & Estimate & 5\% quantile & 95\% quantile\\
\midrule
Exact fusion relative squared error & $1.66\times10^{-30}$ & --- & ---\\
Naive-sum exact relative squared error & 0.456 & --- & ---\\
Fusion minus naive-sum target MSE & $-0.156$ & $-0.378$ & $-0.026$\\
Fusion minus inclusion--exclusion target MSE & $-0.112$ & $-0.282$ & $-0.006$\\
Calibrated normalized incompatibility & 0.02239 & --- & ---\\
Incompatibility decomposition relative error & $2.12\times10^{-16}$ & --- & ---\\
\bottomrule
\end{tabular}
\end{table}

In the compatible experiment the fusion reconstruction is, up to numerical
precision, identical to the direct pooled-additive projection.  Across the
generated samples it has lower target MSE than the naive sum in 99.25\% of
replications and than inclusion--exclusion in 96.90\%.  The separately
calibrated incompatible experiment adds opposite prescriptions on the shared
policy direction until the normalized range residual equals the 0.02239
training diagnostic.  The fusion reconstruction removes the inconsistent
pair, while the Pythagorean decomposition in
\eqref{eq:local_error_split} holds to numerical precision.

Finally, a principal-angle stress varies the fusion lower bound from 1 to
0.05.  At the empirical value 0.2274, the measured perturbation amplification
is 2.097, equal to $1/\sqrt{0.2274}$; at 0.05 it rises to 4.472.  Hence the
stress reaches the bound in \eqref{eq:pseudoinverse_stability} and confirms
that near redundancy is a conditioning problem even when commonality itself
is corrected exactly.  The prespecified criterion is not met against the
arbitrary half-sum, while the direct pooled fit coincides with fusion under
compatibility.  A separate interaction-power diagnostic is also inconclusive
at the uninflated training estimate.  Neither comparison is used as a
successful demonstration.

\section{Conclusion}
\label{sec:conclusion}

Overlapping causal mandates are not automatically components of one portfolio.
The central mathematical issue is a range question inside the attainable
predictable exposure space, and the central financial issue is whether the
resulting common book coincides with the pooled log-optimal exposure under a
valid reference innovation.

The exact decentralization theorem separates four requirements: pooled
immersion, absence of pooled-only opportunity, compatibility of the local
causal shadows, and equality of their reconstruction with the additive pooled
optimum.  The associated regret formulas distinguish what the allocator can
trade from what the local mandate system cannot jointly realize.  A common
benchmark can address neither pooled interaction nor incompatible local
shadows; its best attainable mandate error remains the squared incompatibility
residual.  Penalized implementation makes the same distinction operational
without treating the penalty parameter as a preference-free welfare weight.
At the structural-to-book interface, the additional non-attainability term
records the component of the identified intervention surfaces that no local
traded exposure can reproduce.  This term is distinct from cross-desk
incompatibility and from a change in the implementable common book.
The closed two-desk model shows that overlap changes the allocation response:
the shared sleeve has fusion eigenvalue two, whereas desk-specific sleeves have
eigenvalue one.  It also shows that an incompatible differential prescription
affects the value of the constrained problem but not the implementable book.

The public-data audit gives a narrower operational result.  On strictly
post-information ETF returns, overlap correction improves the quadratic score
relative to direct summation and inclusion--exclusion, and the identified pure
policy shock has broad contemporaneous transmission across industry and
characteristic portfolios.  It does not deliver a positive net trading return
or validate the pooled-interaction sleeve.  The training-calibrated generated
experiment therefore serves only as a controlled reconstruction and stability
check; it is not a substitute for the empirical conditions that fail.

The covariance-weighted continuous-semimartingale formulation preserves these
conclusions under the stated structure, integrability, and immersion
conditions.  They do not extend automatically to jumps, where the
log criterion is no longer quadratic in exposure.  Nor do they provide an
equilibrium account of why desks, signals, or causal mandates arise.  Within
their partial-equilibrium scope, the results identify exactly when local causal
views describe one implementable optimal book and, when they do not, which
obstruction is financial opportunity loss, which is a negotiable common-book
distortion, and which is irreducible mandate infeasibility.

\backmatter

\section*{Statements and Declarations}

\textbf{Funding.} The author received no specific funding for this work.

\textbf{Competing interests.} The author's employment by Miralta Finance Bank
is disclosed in the affiliation. The author declares no other financial or
non-financial interests directly related to this work.

\textbf{Data and code availability.} The public-data audit and generated
experiment use the US Monetary Policy Event-Study Database, the authors'
updated Jaroci\'{n}ski--Karadi shock file, FRED state variables, public daily
market-price files, and daily research portfolios from the Kenneth French Data
Library.  The replication archive contains source URLs, retrieval dates,
SHA-256 checksums, the frozen simulation protocol, random seeds, code, and all
reported outputs.  No proprietary bank data are used.

\end{document}